\documentclass[11pt,a4paper]{article}
\usepackage{lmodern} 
\usepackage[T1]{fontenc}%

\usepackage{mathtools}
\usepackage{amsmath,amsfonts,amsthm,amssymb,color}
\usepackage[dvipsnames]{xcolor}
\usepackage{fancybox}
\usepackage[ocgcolorlinks]{hyperref}
\usepackage[ruled,vlined, linesnumbered]{algorithm2e}
\usepackage{tikz}
\usepackage{pifont}
\usepackage{url}
\usepackage{thm-restate}
\usepackage{tcolorbox}
\allowdisplaybreaks[1]
\usepackage{nicefrac}
\usepackage[labelfont=bf]{caption}
\usepackage[noabbrev,capitalize,nameinlink]{cleveref}

\usepackage[nottoc,numbib]{tocbibind}
\usepackage[backend=biber, isbn=false, style=alphabetic, backref=true, doi=false, url=false, maxcitenames=7, mincitenames=5, maxalphanames=7, maxbibnames=7, minbibnames=5, minalphanames=5, defernumbers=true, sortlocale=en_US]{biblatex}

\usepackage{fullpage}
\colorlet{DarkRed}{red!50!black}
\colorlet{DarkGreen}{green!50!black}
\colorlet{DarkBlue}{blue!50!black}

\hypersetup{
	linkcolor = DarkRed,
	citecolor = DarkGreen,
	urlcolor = DarkBlue,
	bookmarksnumbered = true,
	linktocpage = true
}

\newtheorem{theorem}{Theorem}[section]

\newtheorem{lemma}[theorem]{Lemma}
\newtheorem{invariant}[theorem]{Invariant}

\newtheorem{claim}[theorem]{Claim}

\newtheorem{assumption}[theorem]{Assumption}

\theoremstyle{definition}

\newtheorem{remark}[theorem]{Remark}

\newcommand\congestion{\mathit{cong}}

\DeclareFontFamily{U}{mathb}{\hyphenchar\font45}
\DeclareFontShape{U}{mathb}{m}{n}{<5> <6> <7> <8> <9> <10> gen * mathb
<10.95> mathb10 <12> <14.4> <17.28> <20.74> <24.88> mathb12}{}
\DeclareSymbolFont{mathb}{U}{mathb}{m}{n}
\DeclareMathSymbol{\rcirclearrow}{\mathbin}{mathb}{'367}

\def\wh{\widehat}

\def\wt{\widetilde}

\DeclareMathOperator{\Cong}{\mathtt{cong}}

\DeclareMathOperator{\dist}{\mathtt{dist}}

\renewcommand\O{\wt{O}}

\foreach \x in {a,...,z}{%
	\expandafter\xdef\csname b\x\endcsname{\noexpand\boldsymbol{\x}}
}

\foreach \x in {A,...,Z}{%
	\expandafter\xdef\csname m\x\endcsname{\noexpand\mathbf{\x}}
}

\foreach \x in {A,...,Z}{%
	\expandafter\xdef\csname c\x\endcsname{\noexpand\mathcal{\x}}
}

\newcommand{\Abs}[1]{\left\vert#1\right\vert}

\newcommand{\vol}{{{\textsf{vol}}}}

\newcommand\bwInt{\boldsymbol{{\mathit{w}}}}
\renewcommand\bw{\boldsymbol{\mathit{w}}}

\newcommand{\APSP}{\alpha_{\texttt{APSP}}}

\newcommand{\Set}[2]{\left\{#1 ~\middle\vert~ #2\right\}}

\usepackage{cancel}

\addbibresource{refs.bib}

\begin{document}

\title{A Simple Framework for Finding Balanced Sparse Cuts via APSP}
\author{
Li Chen\thanks{Li Chen was supported by NSF Grant CCF-2106444.}\\ Georgia Tech\\ lichen@gatech.edu
\and
Rasmus Kyng\thanks{The research leading to these results has received funding from the grant ``Algorithms and complexity for high-accuracy flows and convex optimization'' (no. 200021 204787) of the Swiss National Science Foundation.}\\ ETH Zurich \\ kyng@inf.ethz.ch 
\and
Maximilian Probst Gutenberg\footnotemark[2]\\ ETH Zurich\\ maxprobst@ethz.ch
\and
Sushant Sachdeva\thanks{Sushant Sachdeva's research is supported by an NSERC (Natural Sciences and Engineering Research Council of Canada) Discovery Grant.
} \\ University of Toronto \\ sachdeva@cs.toronto.edu
}
\date{}

\maketitle

\begin{abstract}
We present a very simple and intuitive algorithm to find balanced sparse cuts in a graph via shortest-paths. 
Our algorithm combines a new multiplicative-weights framework for solving unit-weight multi-commodity flows with standard ball growing arguments. 
Using  Dijkstra's algorithm for computing the shortest paths afresh every time gives a very simple algorithm that runs in time $\O(m^2/\phi)$ and finds an $\O(\phi)$-sparse balanced cut, when the given graph has a $\phi$-sparse balanced cut.
Combining our algorithm with known deterministic data-structures for answering approximate All Pairs Shortest Paths (APSP) queries under increasing edge weights (decremental setting), we obtain a simple deterministic algorithm that finds $m^{o(1)}\phi$-sparse balanced cuts in $m^{1+o(1)}/\phi$ time.
Our deterministic almost-linear time algorithm matches the state-of-the-art in randomized and deterministic settings up to subpolynomial factors, while being significantly simpler to understand and analyze, especially compared to the only almost-linear time deterministic algorithm, a recent breakthrough  by Chuzhoy-Gao-Li-Nanongkai-Peng-Saranurak (FOCS 2020).

\end{abstract}

\section{Introduction}
Graph partitioning is a fundamental algorithmic primitive that has been studied extensively. 
There are several ways to  formalize the question. We focus on the question of finding balanced separators in a graph. 
More precisely, given an $m$-edge graph $G=(V,E)$, the conductance of a cut is defined by $\Phi_G(S) = \frac{|E_G(S, \overline{S})|}{\min\{\vol(S),\vol{({V\setminus S})}\}}$  where $E_G(S, \overline{S})$ is the set of edges with exactly one endpoint in $S,$ and the volume of $S,$ denoted $\vol(S)$ is the sum of the degrees of vertices in $S.$
We say that a cut $(S, V\setminus S)$ is $b$-balanced if $\vol(S),\vol(V \setminus S) \ge b\cdot \vol(V).$
The objective in the Balanced Separator problem is 
\begin{quote}
    Given parameters $b, \phi \le 1,$ either find a cut $(S, V \setminus S)$ that is $b$-balanced and has conductance $\Phi_G(S) \le \phi,$ or certify that every $\Omega(b)$-balanced\footnote{Note that we allow the algorithm to return an $\Omega(b)$-balanced sparse cut when the graph has a $b$-balanced sparse cut. Such an algorithm is known as a pseudo-approximation algorithm. All known efficient algorithms for balanced cut find pseudo-approximations.} cut has conductance at least $\alpha\phi.$ 
\end{quote}

The Balanced Separator problem is a classic NP-hard problem and under the Small-Set-Expansion hypothesis, even NP-hard to approximate to within an arbitrary constant~\cite{RaghavendraST12}. Thus, the above formulation allows for $\alpha$-approximation for some $\alpha < 1.$
This problem has been studied extensively due to its application to divide-and-conquer on graphs, and theoretical connections to random walks, spectral graph theory, and metric embeddings.
\paragraph{Our Results.}
In this paper, we present a very simple and intuitive algorithm for Balanced Separator.
Our algorithm gives a simple framework based on (scalar) multiplicative weights that reduces the problem to computing approximate shortest paths in a graph under increasing lengths for the edges (decremental setting). Our framework either finds a balanced cut with small conductance, or certifies that every balanced cut has large conductance (\Cref{thm:fineGrainedReduction}).

If one simply uses Dijkstra's algorithm to compute the necessary shortest paths afresh each time, our algorithm gives an $\O(m^2/\phi)$ time algorithm that achieves approximation  $\alpha = \Omega(1/\log^2 n)$ for cuts of constant balance, and $\alpha = \Omega(1/\log n \cdot \log\log n)$ for cuts of constant balance and conductance (\Cref{theorem:balanceCutCorNaive}).
If we instead use known  $n^{o(1)}$-approximate deterministic dynamic algorithms for decremental All-Pairs-Shortest-Paths (APSP), we obtain an algorithm that runs in $m^{1+o(1)}/\phi$ and achieves an approximation of $\alpha = n^{o(1)}$ (\Cref{theorem:balanceCutCorAPSP}).

Our algorithm can be described very simply.
We attempt to embed an explicit expander $H$ as a multi-commodity flow using  paths of length $\O(\phi^{-1})$ in $G,$ while ensuring that the congestion on the edges in $G$ is at most $\O(\phi^{-1}).$ 
If the ends of points of an edge $e \in H$ are connected in $G$ using a short path, we use the  path in $G$ to route $e.$
Further, we increase the length of each edge on this path by a multiplicative factor. This increased length makes it less likely that this path will be used in the future. A simple multiplicative-weights argument here now allows us to bound the congestion over the course of entire algorithm.
If our algorithm succeeds in embedding most edges of $H$ in $G,$ this provides us a certificate that all balanced cuts in $G$  have expansion $\widetilde{\Omega}(\phi).$
If our algorithm fails, we find several edges of $H$ such that the ends points of these edges are at distance $\widetilde{\Omega}(\phi^{-1})$ as measured by the lengths of the edges computed by the algorithm. Now, we can apply a simple ball-growing argument to recover a balanced cut of conductance $\phi.$

\paragraph{Applications.} While finding the Balanced Sparsest Cut is a crucial ingredient in Divide-And-Conquer frameworks for many algorithms (see \cite{shmoys1997cut} for an introduction), and has various applications ranging from VLSI Design, Image Segmentation \cite{shi2000normalized} to PRAM emulation, we want to point out in particular that our algorithm can be used to replace the use of the Cut-Matching framework~\cite{KhandekarRV09} in the work of Saranurak-Wang~\cite{SaranurakW18} (see \Cref{rem:extractExpanderSW19} in \Cref{sec:seporcert}). Together, this gives an elegant framework for computing expander decompositions which in turn have been pivotal in various recent breakthroughs in algorithmic graph theory with applications to computing Electric Flows \cite{SpielmanT04}, Maximum Flows and Min-Cost Flows \cite{chen2022maximum}, Gomory-Hu Trees \cite{abboud2021subcubic, abboud2022breaking} for finding Global Min-Cuts deterministically \cite{kawarabayashi2018deterministic, li2020deterministic, li2021deterministic}, and many, many more.

\paragraph{Comparison to Previous Works.}
There has been a lot of work on algorithms for Balanced Separator.
The celebrated work of Leighton and Rao \cite{LeightonRao} showed that one could achieve an $O(\log n)$ approximation to Balanced Separator by repeatedly solving a linear program that computes a fractional multi-commodity flow.
Several works give a faster implementation of this approach via a multiplicative-weights algorithms for multi-commodity flow~\cite{PlotkinST95, Young95, GargK07, Fleischer00}, and by using the Leighton-Rao result as a black-box to deduce that they compute an $O(\log n)$ approximation. %
However, the running time they achieved for Balanced Separator was $\Omega(nm^2)$ since they repeatedly find and remove low-conductance cuts, each of which might be highly unbalanced, possibly introducing a factor of $n.$
In contrast, our algorithm works directly with balanced cuts, rather than multi-commodity flows. Our algorithm is in the same spirit as the Garg-K\"{o}nemann, Fleischer framework from \cite{GargK07, Fleischer00}, but directly incorporates the Leighton-Rao algorithm for finding low conductance cuts.

The groundbreaking work of Spielman and Teng on solving Laplacian linear systems~\cite{SpielmanT04} introduced the notion of local algorithms for finding low-conductance cuts, where the running time of the algorithm scales almost-linearly with the smaller size of the output cut. Thus the algorithm can be applied repeatedly to find balanced cuts in almost-linear time. 
Inspired by this work, multiple local algorithms were proposed~\cite{AndersenCL07, AndersenP09}. While all these algorithms are fast, and almost-linear in running time, they are inherently randomized, and the balanced cut found has conductance $\widetilde{\Omega}(\sqrt{\phi})$.%
In contrast, our algorithm is deterministic, and finds a cut of conductance at most $\phi\cdot m^{o(1)}.$

Another line of work develops fast SDP algorithms based on matrix-multiplicative weights. The most popular of these is the Cut-Matching framework of Khandekar-Rao-Vazirani~\cite{KhandekarRV09}. %
Inspired by~\cite{KhandekarRV09}, several works~\cite{AroraK07, OrecchiaSVV08, OrecchiaV11, OrecchiaSV12} obtained almost-linear time algorithms for Balanced Separator building on the matrix-multiplicative weights framework. 
While the cut-matching framework and the resulting algorithms are elegant, they rely on rather involved techniques that are non-intuitive. 
The celebrated work of Arora-Rao-Vazirani~\cite{AroraRV09} obtained an $O(\sqrt{\log n})$ approximation for Balanced Separator via an SDP based algorithm. Faster algorithms built on their ideas~\cite{AroraHK10, Sherman09} achieved almost-linear running time with $O(\sqrt{\log n})$ approximation.
However, these algorithms are very involved, based on matrix-multiplicative weights,  randomized, and rely on near-linear time (approximate) max-flow. 
Our algorithm and analysis work with scalar multiplicative weights and are very simple to understand. Further, our algorithm only need to invoke approximate shortest-path oracles under increasing edge weights.

The only previous deterministic, almost-linear time approximation algorithm for Balanced Separator was given recently by Chuzhoy-Gao-Li-Nanongkai-Peng-Saranurak~\cite{ChuzhoyGLNPS19}. Their algorithm relies on a rather intricate recursive scheme that implicitly uses at each recursion level a reduction to decremental APSP. But even the analysis on a single level relies on the rather involved expander pruning framework. In contrast to their work, the simplicity of our algorithm and analysis stands out.

We also point out that a generalization of \cite{ChuzhoyGLNPS19} to weighted graphs was given by Li and Saranurak \cite{li2021deterministic}. This algorithm implicitly uses \cite{ChuzhoyGLNPS19}, and is therefore even more involved.

\section{Main Result}
We formally state our results in this section.
Our main result is the following theorem.
\begin{theorem}\label{thm:conductanceVersion}
Given an $n$-vertex, $m$-edge graph $G$, an $\APSP$-approx decremental APSP algorithm and conductance parameter $\phi$ and balance parameter $b \in [1/n, 1/4]$, the algorithm $\textsc{LowConductanceCutOrCertify}(G, \phi, b)$ either
\begin{enumerate}
    \item\label{case:balanceLCCut} Returns a cut $(S, \overline{S})$ with $\vol_G(S), \vol_G(\overline{S}) \ge b \cdot \vol(G)$ with conductance $\Phi_G(S) \leq \phi$, or
    \item\label{case:noBalanceLCCut} Certifies that every cut $(X, \overline{X})$ with $\vol_G(X), \vol_G(\overline{X}) = \Omega( b \cdot \vol_G(G))$ has conductance at least  $\phi \cdot \Omega\left(\frac{1}{\APSP \log n \cdot \log(1/b) \cdot \log(\log(n)\APSP/(b\phi))}\right) = \phi \cdot \Omega\left(\frac{1}{\APSP \log^3(n)}\right).$
\end{enumerate}
The algorithm is deterministic and requires the APSP data structure to undergo $O( \APSP \cdot m \phi^{-1} \log^3 n)$ updates, queries it $O(m)$ times and spends an additional $O( \APSP \cdot m  \phi^{-1} \log^3 n)$ time.
\end{theorem}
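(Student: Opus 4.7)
The plan is to design $\textsc{LowConductanceCutOrCertify}$ around embedding a demand graph $H$ (an explicit constant-degree expander supported on terminals obtained by splitting each vertex $v$ of $G$ into $\Theta(\deg_G(v))$ copies, so that demand is proportional to degree) into $G$ via short low-congestion paths, in the Garg--K\"onemann multiplicative-weights style, using the supplied $\APSP$-approximate decremental APSP data structure as the distance oracle. We maintain edge weights $w_e$ on $G$, initially $w_e = 1/m$, and process the edges of $H$ one by one. For each edge $(u,v) \in E(H)$ we query the data structure for an approximate $u$-$v$ path $P$ of weighted length at most a threshold $D = \Theta(\APSP \cdot \phi^{-1} \log n)$. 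If such a $P$ exists, we route one unit of $H$-demand along $P$ and multiplicatively update $w_e \leftarrow w_e(1+\eps)$ for every $e \in P$; otherwise we mark $(u,v)$ as \emph{unrouted}. Because weights only increase, the updates are decremental and legal for the data structure.

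The multiplicative-weights analysis goes through the potential $\Phi = \sum_e w_e$: each routed path of approximate length $\le D$ contributes at most a $(1 + \eps D)$ multiplicative increase to $\Phi$, so after $T$ routings $\Phi \le (1/m) \cdot m \cdot \exp(\eps D T / m)$ on average per edge. Terminating as soon as any $w_e$ exceeds a fixed constant threshold therefore bounds the congestion on any single edge by $O(\log m / \eps)$. Choosing $\eps = \Theta(\log(1/b)/\log m)$ and $D$ as above balances the three parameters so that (i) the overall dilation of the embedding is $\O(\phi^{-1})$, (ii) the congestion is $\O(\phi^{-1})$, and (iii) the total number of APSP updates is $O(\APSP \cdot m \phi^{-1}\log^3 n)$ across $O(m)$ queries, matching the complexity claimed in the theorem.

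At termination we split on how much of $H$ was routed. If at least a large constant fraction of $E(H)$ was successfully embedded, the routed subgraph is still an expander (expanders are robust to removal of an $o(1)$ fraction of edges), and it embeds into $G$ with congestion and dilation $\O(\phi^{-1})$. The standard congestion/dilation-to-conductance reduction then certifies that every cut $(X,\overline{X})$ with $\vol_G(X),\vol_G(\overline{X}) = \Omega(b \cdot \vol_G(G))$ has conductance $\Omega(\phi / (\APSP \log^3 n))$, giving case~\ref{case:noBalanceLCCut}. Otherwise a constant fraction of $H$-edges failed: for each failed pair $(u,v)$ the approximate distance exceeds $D$, so the true distance exceeds $D/\APSP = \Omega(\phi^{-1}\log n)$. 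Because $H$ has constant degree and terminal multiplicity proportional to $\deg_G$, we can locate a single vertex $s$ whose expander-neighborhood contains $\Omega(b \cdot \vol_G(G))$ of terminal volume lying at true distance $> D/\APSP$ from $s$. A standard ball-growing argument around $s$ in the unweighted graph $G$ then produces, at some radius $r < D/\APSP$, a cut $(B_r, V \setminus B_r)$ of conductance at most $O(\log(\vol_G(G))/r) \le \phi$, whose complement contains $\Omega(b \cdot \vol_G(G))$ volume, delivering the $b$-balanced $\phi$-conductance cut required by case~\ref{case:balanceLCCut}.

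The main obstacle I anticipate is making the dichotomy genuinely exhaustive with the right quantitative slack: the fraction of $E(H)$ that must fail before ball-growing yields an $\Omega(b)$-balanced cut depends on both the expansion of $H$ and on the terminal distribution, and the ball-growing call must be seeded at a center whose $H$-neighborhood certifies sufficient faraway volume rather than being diffused across many small pieces. Reconciling this with the multiplicative-weights congestion bound, while keeping the potential well-defined under approximate rather than exact shortest paths, is where the three logarithmic factors enter (one for the Garg--K\"onemann potential, one for the balance threshold $\log(1/b)$, one for ball-growing), producing the stated $\log^3 n$ loss.
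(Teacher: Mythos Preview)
Your high-level plan---multiplicative-weights embedding of a degree-proportional expander, then a dichotomy between ``most edges embed'' (certify) and ``many edges fail'' (cut via ball-growing)---is exactly the paper's strategy. But the ball-growing half of your dichotomy has two genuine gaps.

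\textbf{First, ball-growing must be in the weighted graph.} The failed pairs $(u,v)$ are far apart in the metric induced by the MWU weights $w$, not in hop distance. Since some $w_e$ are still $1/m$ while others are close to the termination threshold, a large $w$-distance can correspond to a single hop in $G$, so growing unweighted balls at radius $r < D/\APSP$ is meaningless. The paper grows balls in $(G,w)$: the key potential is $\Phi(z,r) \approx w(E(B(z,r)))$, and a thin layer is one where $\Phi(z,r{+}1) \le (1+O(\log\|w\|_1/D))\Phi(z,r)$. This is what ties the cut size to $\|w\|_1 = O(n)$ rather than to $m$ or to hop counts.

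\textbf{Second, a single ball-growing call does not give balance on the ball side.} Even if you locate a center $s$ with $\Omega(b\cdot\vol(G))$ volume far from it, the thin-layer radius could be $r=0$, yielding $S=\{s\}$---the complement is large but $S$ is not. The paper handles this by \emph{iterating}: as long as some failed pair $(u,v)\in F$ survives in the remaining graph $G[X]$, it peels off a thin layer and removes it from $X$. Because $|F| > 10b'n$ and $H$ has constant degree, an edge of $F$ survives in $H[X]$ until $|V\setminus X|\ge b'n$; summing the layer boundaries and using $\|w\|_1=O(n)$ gives total cut size $O(n\log n/D)=O(b'n\psi)$, hence sparsity $\le\psi$ on a $b'$-balanced cut.

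A related point: the paper does not use a single distance threshold $D$. It runs $\lfloor\log_2(1/b)\rfloor$ rounds with geometrically increasing thresholds $2^iC\APSP$, and the round $i$ at which too many edges remain unrouted determines the balance parameter $b'=2^{-i}\in[b,1/2]$. This is what makes the number of failed edges ($>10b'n$) and their pairwise distance ($>C/b'$) scale together, which your single-threshold scheme does not arrange.
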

\begin{remark}
APSP data structures often answer queries in time proportional to the number of edges on the approximate shortest path that they return. Our algorithm ensures that the number of such edges on all paths is bound by $O( \APSP \cdot m \phi^{-1} \log^3 n)$.
\end{remark}

We note that for computing balanced cuts (i.e. cuts where $b$ is constant) which is arguably the most interesting case, our approximation guarantee becomes $\Omega(1/ \APSP\log n \cdot \log (\APSP \phi^{-1} \cdot \log n))$. For a decremental APSP data structure with constant-approximation and $\phi \ge \Omega(1/\log^{O(1)} n)$, this further simplifies to $\Omega(1/\log n \log\log n)$.

Using the efficient $n^{o(1)}$-approximate decremental APSP data structure from \cite{BGS21} or \cite{Chuzhoy21}, we obtain the following result\footnote{We remark that both data structures \cite{BGS21, Chuzhoy21} implicitly rely on the framework of Chuzhoy-Gao-Li-Nanongkai-Peng-Saranurak~\cite{ChuzhoyGLNPS19}, thus, our reduction in combination with these data structures does not yield a simpler algorithm in itself. We are however optimistic that simpler data structures for the decremental APSP problem are available in the future that do not necessarily rely on expander techniques.}:
\begin{theorem}
\label{theorem:balanceCutCorAPSP}
Given an $n$-vertex, $m$-edge graph $G$, a conductance parameter $\phi$ and balance parameter $b \in [1/n, 1/4]$, there is an algorithm $\textsc{LowConductanceCutOrCertify}(G, \phi, b)$ that can either
\begin{enumerate}
    \item\label{casecor:balanceCut} Find a cut $(S, \overline{S})$ with $\vol_G(S), \vol_G(\overline{S}) \ge b \cdot \vol(G)$ with conductance $\Phi_G(S) \leq \phi$, or
    \item\label{casecor:noBalanceCut} Certify that every cut $(X, \overline{X})$ with $\vol_G(X), \vol_G(\overline{X}) = \Omega( b \cdot \vol_G(G))$ has conductance $\phi/n^{o(1)}.$
\end{enumerate}
The algorithm is deterministic and runs in $m^{1+o(1)} / \phi$ time.
\end{theorem}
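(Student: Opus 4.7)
The plan is to derive \Cref{theorem:balanceCutCorAPSP} as a direct corollary of \Cref{thm:conductanceVersion} by instantiating its abstract $\APSP$-approximate decremental APSP oracle with a concrete deterministic data structure, namely the one from \cite{BGS21} or \cite{Chuzhoy21}. These provide an $\APSP = n^{o(1)}$ approximation, are deterministic, handle decremental edge weight updates (weights remain bounded by $\poly(n)$), have total update time $m^{1+o(1)}$, and answer each query in time $n^{o(1)}$ plus the number of edges on the returned approximate path.

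With this instantiation in hand, I would verify the two alternatives of \Cref{theorem:balanceCutCorAPSP}. Case~\ref{casecor:balanceCut} is identical to Case~\ref{case:balanceLCCut} of \Cref{thm:conductanceVersion}. For Case~\ref{casecor:noBalanceCut}, plugging $\APSP = n^{o(1)}$ into the certificate of Case~\ref{case:noBalanceLCCut} gives conductance at least
\[
\phi \cdot \Omega\!\left(\tfrac{1}{\APSP \log^3 n}\right) \;=\; \phi \cdot \Omega\!\left(\tfrac{1}{n^{o(1)} \log^3 n}\right) \;=\; \phi/n^{o(1)},
\]
as required.

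For the running time, \Cref{thm:conductanceVersion} guarantees that the algorithm performs at most $O(\APSP \cdot m \phi^{-1} \log^3 n) = m^{1+o(1)}/\phi$ weight-increase updates, issues $O(m)$ queries, and spends an additional $m^{1+o(1)}/\phi$ time on its own bookkeeping. Because the APSP data structure has total update time $m^{1+o(1)}$ independent of the number of discretized weight increases (as long as weights stay polynomially bounded, which they do: every edge weight only grows and is capped once it exceeds the relevant threshold used in the embedding), the cost of all updates is $m^{1+o(1)}$. The $O(m)$ queries contribute $O(m \cdot n^{o(1)})$ baseline cost plus a term proportional to the total length of returned paths; by the \textbf{Remark} following \Cref{thm:conductanceVersion}, this latter quantity is itself $O(\APSP \cdot m \phi^{-1} \log^3 n) = m^{1+o(1)}/\phi$. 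Summing all contributions yields the claimed $m^{1+o(1)}/\phi$ total time.

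The only genuine obstacle is a bookkeeping one: confirming that the update pattern produced by \textsc{LowConductanceCutOrCertify} fits the interface assumed by the \cite{BGS21, Chuzhoy21} data structures. Specifically, I would check that weights only increase (decremental setting), that the polynomial bound on weights is respected so that the total update time bound of $m^{1+o(1)}$ truly applies, and that multiplicative updates can be simulated with the standard $(1+\eta)$-discretization used in decremental APSP — all of which are standard and should follow immediately from how \Cref{thm:conductanceVersion} specifies its interaction with the oracle.
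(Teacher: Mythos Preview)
Your proposal is correct and matches the paper's approach exactly: the paper presents \Cref{theorem:balanceCutCorAPSP} as an immediate corollary of \Cref{thm:conductanceVersion} by plugging in the deterministic $n^{o(1)}$-approximate decremental APSP data structures of \cite{BGS21,Chuzhoy21}, and your verification of both the approximation factor and the $m^{1+o(1)}/\phi$ running time (including the use of the Remark bounding total path length) is precisely the intended argument. Your proposal is in fact more detailed than what the paper spells out.
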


On the other hand, one can run Dijkstra's shortest path algorithm for every query and obtain the following:
\begin{theorem}
\label{theorem:balanceCutCorNaive}
Given an $n$-vertex, $m$-edge graph $G$, a conductance parameter $\phi$ and balance parameter $b \in [1/n, 1/4]$, there is a deterministic algorithm $\textsc{LowConductanceCutOrCertify}(G, \phi, b)$ that can either
\begin{enumerate}
    \item\label{caseCor2:balanceCut} Find a cut $(S, \overline{S})$ with $\vol_G(S), \vol_G(\overline{S}) \ge b \cdot \vol(G)$ with conductance $\Phi_G(S) \leq \phi$, or
    \item\label{casecor2:noBalanceCut} Certify that every cut $(X, \overline{X})$ with $\vol_G(X), \vol_G(\overline{X}) = \Omega( b \cdot \vol_G(G))$ has conductance $\phi \cdot \Omega\left(\frac{1}{\log n \cdot \log(1/b) \cdot \log(\log(n)/(\phi b))}\right).$
\end{enumerate}
The algorithm is deterministic and runs in $\O(m^2 / \phi)$ time.
\end{theorem}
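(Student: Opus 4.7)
The plan is to specialize \Cref{thm:conductanceVersion} to the trivial exact decremental APSP oracle obtained by re-running Dijkstra's algorithm from scratch on each query. Since Dijkstra returns genuine shortest paths under the current weights, this instantiation has approximation factor $\APSP = 1$. Substituting $\APSP = 1$ into case~\ref{case:noBalanceLCCut} of \Cref{thm:conductanceVersion} immediately yields the conductance lower bound $\phi\cdot \Omega(1/(\log n \cdot \log(1/b)\cdot \log(\log(n)/(\phi b))))$ claimed in case~\ref{casecor2:noBalanceCut}; case~\ref{caseCor2:balanceCut} is verbatim case~\ref{case:balanceLCCut} there. Determinism is inherited from Dijkstra together with \Cref{thm:conductanceVersion}.

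The only remaining step is the running-time bookkeeping. In the naive implementation there is no nontrivial data-structure state: an edge-weight increase simply overwrites an entry of the weight array in $O(1)$ time, so handling all $O(\APSP\cdot m\phi^{-1}\log^3 n) = O(m\phi^{-1}\log^3 n)$ updates costs only $\O(m/\phi)$ in total. Each of the $O(m)$ queries runs one Dijkstra computation on the current weighted graph in $O((m+n)\log n) = \O(m)$ time (and recovers the actual path via standard parent pointers), contributing $\O(m^2)$ summed over all queries. Adding the $O(\APSP\cdot m\phi^{-1}\log^3 n) = \O(m/\phi)$ additional time charged by \Cref{thm:conductanceVersion} itself, the total runtime is
\[
\O(m^2) + \O(m/\phi) \;=\; \O(m^2/\phi),
\]
where the last equality uses $\phi \le 1$ so that $\max(m^2, m/\phi) \le m^2/\phi$.

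I do not anticipate any real obstacle: the statement is a direct corollary of \Cref{thm:conductanceVersion}, and the only nontrivial thing to verify is that Dijkstra meets the interface of the abstract APSP oracle (it does, with $\APSP=1$, and it does so without paying for updates). In particular, the remark following \Cref{thm:conductanceVersion} about paying per edge of the returned path is irrelevant here, since we simply charge $\O(m)$ per Dijkstra call regardless of path length, and this already dominates.
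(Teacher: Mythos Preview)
Your proposal is correct and follows exactly the approach the paper intends: the paper presents this theorem with only the one-line justification ``one can run Dijkstra's shortest path algorithm for every query,'' and your argument simply fills in the bookkeeping of instantiating \Cref{thm:conductanceVersion} with $\APSP=1$ and tallying the $O(m)$ Dijkstra calls at $\O(m)$ each. Your final inequality $\O(m^2)+\O(m/\phi)=\O(m^2/\phi)$ via $\phi\le 1$ is the right way to absorb both terms.
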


\section{Preliminaries}

\paragraph{Sparsity and Expanders.} In this article, we consider an undirected $n$-vertex graph $G = (V,E)$. For such a graph, we define the sparsity of a cut $\emptyset \subsetneq S \subsetneq V$ by $\Psi_G(S) = \frac{|E_G(S, \overline{S})|}{\min\{|S|, |\overline{S}|\}}$ where $E_G(S, \overline{S})$ is the set of edges with exactly one endpoint in $S$. The sparsity of a graph $G$ is defined $\Psi(G) = \min_{\emptyset \subsetneq S \subsetneq V} \Psi(S)$. If $G$ contains no $\psi$-sparse cut, we say that $G$ is a $\psi$-expander.

\paragraph{Conductance vs. Sparsity.} Via a simple reduction replacing each vertex of degree $d$ with an explicit expander graph on $d$ vertices (see~\cref{sec:condViaSpars}), we can reduce to the case where every vertex has degree at most 10. 
In such a graph, for any set $S \subseteq V,$ $|S| \le \vol(S) \le 10|S|,$ and thus, instead of conductance $\Phi_G(S) = \frac{|E_G(S, \overline{S})|}{\min\{\vol(S),\vol{({V\setminus S})}\}},$ we can work with sparsity $\Psi_G(S) = \frac{|E_G(S, \overline{S})|}{\min\{|S|,|V\setminus S|\}}.$ 
Throughout the rest of the article, we will therefore work with sparsity instead of conductance.

\paragraph{Expander Constructions.} Given any $n$, there is a deterministic construction of a $\Omega(1)$-expander on $n$ vertices of bounded degree.
This will be an essential tool used in our proof and we use $\psi_0$ to denote the universal lower bound on the sparsity of such family of expanders.
\begin{theorem}[See Thm. 2.4 of \cite{ChuzhoyGLNPS19} based on Thm 2 of \cite{GabberG81}.]\label{thm:constExp}
There is an universal constant $\psi_0 \in (0, 1)$ and an algorithm $\textsc{ConstDegExpander}(n)$ that returns a $\psi_0$-expander $H$ on a vertex set of size $n$ with maximum degree $9$. The algorithm runs in time $O(n)$.
\end{theorem}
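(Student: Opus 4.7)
The plan is to follow the explicit Cayley-graph construction of Gabber and Galil. Given a target size $n$, I first pick the smallest integer $k$ with $k^2 \ge n$, so that $k \le \sqrt{n} + 1$. I then build an $8$-regular graph $H_0$ on the vertex set $\mathbb{Z}_k \times \mathbb{Z}_k$ by connecting each vertex $(x,y)$ to its images under a fixed generating set of affine maps: the shifts $(x \pm 1, y)$ and $(x, y \pm 1)$, together with the shears $(x+y, y)$, $(x+y+1, y)$, $(x, x+y)$, and $(x, x+y+1)$, all arithmetic modulo $k$. This construction is fully explicit, so $H_0$ can be written down in $O(k^2) = O(n)$ time deterministically, simply by enumerating the $8$ neighbors of each vertex.

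The core step is to show that $H_0$ has spectral gap bounded away from zero by a universal constant independent of $k$. For this, I would use Fourier analysis on $\mathbb{Z}_k \times \mathbb{Z}_k$: the characters $\chi_{a,b}(x,y) = e^{2\pi i (ax + by)/k}$ form an orthonormal eigenbasis for the pure shift operators, with eigenvalues $\frac{1}{2}(\cos(2\pi a/k) + \cos(2\pi b/k))$ that are close to $1$ only when both $a/k$ and $b/k$ are close to $0$ modulo $1$. The shear operators are not diagonal in this basis but instead permute characters by $(a,b) \mapsto (a, a+b)$ and $(a,b) \mapsto (a+b, b)$. The classical Gabber--Galil argument then says: if an eigenvector of the averaged operator has eigenvalue close to $1$, its Fourier mass must concentrate on low-frequency characters, yet the shears map any such low-frequency character to a character whose shift eigenvalues are bounded away from $1$, producing a contradiction unless the eigenvector is proportional to $\chi_{0,0}$. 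Quantifying this via a harmonic-analysis / pigeonhole inequality yields a universal bound $\lambda_2(H_0) \le 1 - c$ on the second normalized-adjacency eigenvalue, and Cheeger's inequality then gives sparsity $\Psi(H_0) \ge c/2$.

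Finally, to realize exactly $n$ vertices, I would prune an arbitrary set of $k^2 - n < 2\sqrt{n} + 1$ vertices from $H_0$; for each removed vertex $v$, I re-route its incident edges by adding a single local replacement edge between pairs of $v$'s former neighbors, raising the maximum degree by at most one and hence staying within the bound of $9$. Because at most $O(\sqrt{n})$ vertices are disturbed, a standard counting argument shows the resulting graph $H$ still has sparsity $\Omega(1)$ for all $n$ above some absolute constant $n_0$, while the finitely many cases $n \le n_0$ are handled by a hard-coded table. The main obstacle in this proposal is firmly in the second paragraph: establishing the universal spectral gap for the Cayley-style construction. Everything else -- the explicit enumeration, Cheeger's inequality, and the pruning -- is routine bookkeeping once the spectral bound is in hand.
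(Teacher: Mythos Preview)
The paper does not prove this statement: it is quoted as a black box from \cite{GabberG81} (in the formulation of \cite{ChuzhoyGLNPS19}), and the paper even supplies a self-contained randomized alternative in \Cref{sec:randExpand} precisely so that a reader need not chase that citation. There is thus no in-paper proof to compare against; your sketch is effectively a summary of what the cited references do, and the spectral-gap portion via Fourier analysis on $\mathbb{Z}_k \times \mathbb{Z}_k$ is indeed the standard Gabber--Galil argument.

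The one place where your write-up has a genuine gap is the pruning step. Deleting $k^2 - n = O(\sqrt{n})$ vertices and patching each with ``a single local replacement edge'' does not automatically keep the maximum degree at $9$: a surviving vertex may be adjacent to several deleted vertices and thus receive several replacement edges. More seriously, it does not obviously preserve $\Omega(1)$ sparsity for cuts $S$ with $|S| = O(\sqrt{n})$, since the $O(\sqrt{n})$ disturbed edges could in principle constitute the entire boundary of such an $S$; your ``standard counting argument'' only handles $|S| \gg \sqrt{n}$. The clean fix is to \emph{contract} rather than delete: identify vertex $i$ with vertex $i+n$ for $0 \le i < k^2 - n$. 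Contraction can only increase sparsity, and since $k^2 - n < 2k$ these identifications form a matching, so the merged vertices have degree at most a fixed constant. With that adjustment the sketch is complete.
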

\begin{remark}\label{rmk:simpleExpander}
While deterministic algorithms to construct a constant-degree, constant sparsity expander require rather involved proof techniques, we prove in \Cref{sec:randExpand} a simple randomized algorithm to construct a $O(\log n)$-degree $\Omega(\log n)$-expander $H$ in $O(n \log n)$ time. Using this randomized algorithm in place of the above theorem only affects guarantees of our overall algorithm by polylogarithmic factors. 
\end{remark}

\paragraph{Graph Embeddings.} Given graphs $H$ and $G$ that are defined over the same vertex set, then we say that a function $\Pi_{H \mapsto G}$ is an \emph{embedding} if it maps each edge $(u,v) \in H$ to a $u$-to-$v$ path $P_{u,v} = \Pi_{H \mapsto G}(u,v)$ in $G$. We say that the \emph{congestion} of $\Pi_{H \mapsto G}$ is the maximum number of times that any edge $e \in E(G)$ appears on any embedding path: \[
\congestion(\Pi_{H \mapsto G}) = \max_{e \in E(G)} |\{ e' \in E(H) \;|\; e \in \Pi_{H \mapsto G}(e') \}|.
\]

\paragraph{Certifying Expander Graphs via Embeddings.} Graph embeddings are useful since they allow us to argue that if we can embed a graph $H$ that is known to be an expander into a graph $G$, then we can reason about the sparsity of $G$, as shown below.

\begin{lemma}\label{lma:folklore_embeddingEasy}
Given a $\psi$-expander graph $H$ and an embedding of $H$ into $G$ with congestion $C$, then $G$ must be an $\Omega\left(\frac{\psi}{C}\right)$-expander. 
\end{lemma}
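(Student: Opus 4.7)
The plan is to prove the contrapositive in spirit: take any cut of $G$ and lower-bound its sparsity by showing that the embedding forces many $G$-edges across it. Concretely, fix an arbitrary nonempty proper cut $(S, \overline{S})$ of $V(G) = V(H)$, and let $k = \min(|S|, |\overline{S}|)$. I want to show $|E_G(S, \overline{S})| \ge (\psi/C) \cdot k$, which immediately gives $\Psi_G(S) \ge \psi/C$ and hence $\Psi(G) = \Omega(\psi/C)$, as claimed.

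The first step is to invoke the expansion of $H$: since $H$ is a $\psi$-expander and $(S, \overline{S})$ is a cut on the shared vertex set, we have $|E_H(S, \overline{S})| \ge \psi \cdot k$. Next, I use the embedding $\Pi_{H \mapsto G}$ to transfer this bound to $G$. For each edge $(u,v) \in E_H(S, \overline{S})$, the path $\Pi_{H \mapsto G}(u,v)$ starts in $S$ and ends in $\overline{S}$, so it must traverse at least one edge of $E_G(S, \overline{S})$. Therefore, if I count, over all $H$-edges $e' \in E_H(S, \overline{S})$, the number of $G$-edges in $E_G(S, \overline{S})$ that appear on $\Pi_{H \mapsto G}(e')$, the total is at least $|E_H(S, \overline{S})| \ge \psi \cdot k$.

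The final step is to switch the order of summation and apply the congestion bound. Each fixed edge $e \in E_G(S, \overline{S})$ lies on at most $\congestion(\Pi_{H \mapsto G}) \le C$ embedding paths, so the same double-count is at most $C \cdot |E_G(S, \overline{S})|$. Combining these inequalities yields
\[
C \cdot |E_G(S, \overline{S})| \;\ge\; \psi \cdot k,
\]
so $\Psi_G(S) \ge \psi/C$. Since $S$ was arbitrary, $G$ is an $\Omega(\psi/C)$-expander.

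There is no real obstacle here: the argument is a one-step double-counting once the cut in $H$ is linked to the cut in $G$ via the embedding paths. The only mild subtlety worth stating carefully is that $H$ and $G$ share a vertex set, so the same set $S$ defines cuts in both, and the endpoints of any $H$-edge across $(S, \overline{S})$ lie on opposite sides in $G$ as well, guaranteeing that each associated embedding path contributes at least one crossing edge in $G$.
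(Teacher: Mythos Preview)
Your proof is correct and follows essentially the same approach as the paper: fix an arbitrary cut, use the expansion of $H$ to lower-bound the number of $H$-edges crossing it, observe that each corresponding embedding path must cross the cut in $G$, and divide by the congestion bound $C$. The double-counting you spell out is exactly the paper's argument, just stated with slightly more detail.
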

\begin{proof}
Consider any cut $(S, V \setminus S)$ with $|S| \leq |V \setminus S|$. Since $H$ is a $\psi$-expander, we have that $|E_H(S, V \setminus S)| \geq \psi|S|$. We also know by the embedding of $H$ into $G$, that for each edge $(u,v) \in E_H(S, V \setminus S)$, we can find path a $P_{u,v}$ in $G$ that also has to cross the cut $(S, V \setminus S)$ at least once. But since each edge in $G$ is on at most $C$ such paths, we can conclude that at least $|E_H(S, V \setminus S)|/ C \geq \psi|S|/C$ edges in $G$ cross the cut $(S, V \setminus S)$.
\end{proof}

We use the following generalization of this Folklore result to balanced sparse cuts.

\begin{restatable}{lemma}{folkloreEmbed}\label{lma:folklore_embedding}
Given a $\psi$-expander graph $H$, a subgraph $H' \subseteq H$ with $|E(H \setminus H')| \leq \frac{\psi}{2}bn$ for some $b \in [0,1]$ and an embedding $\Pi_{H' \mapsto G}$ of $H'$ into $G$ with congestion $C$, then for all cuts $(S, \overline{S})$ where $bn \leq |S| \leq n/2$, we have $\Psi_G(S) = \Omega\left(\frac{\psi}{C}\right)$. 
\end{restatable}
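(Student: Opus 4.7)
The plan is to adapt the proof of the unbalanced folklore embedding lemma (\Cref{lma:folklore_embeddingEasy}) with one extra step that accounts for edges of $H$ missing from $H'$. The hypothesis $bn \leq |S|$ is precisely the condition needed to absorb the $\frac{\psi}{2}bn$ "budget" of missing edges into a constant-factor loss, so the final bound changes only by a factor of $2$.

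Fix an arbitrary cut $(S, \overline{S})$ with $bn \leq |S| \leq n/2$. I would proceed in three steps. First, use the fact that $H$ is a $\psi$-expander together with $|S| \leq n/2$ to conclude
\[
|E_H(S, \overline{S})| \;\geq\; \psi \cdot \min\{|S|, |\overline{S}|\} \;=\; \psi |S|.
\]
Second, note that $E_{H'}(S, \overline{S})$ differs from $E_H(S, \overline{S})$ by at most $|E(H \setminus H')| \leq \frac{\psi}{2} bn$ edges, and invoke the balance assumption $bn \leq |S|$ to turn this absolute bound into a relative one:
\[
|E_{H'}(S, \overline{S})| \;\geq\; |E_H(S, \overline{S})| - \tfrac{\psi}{2} bn \;\geq\; \psi |S| - \tfrac{\psi}{2}|S| \;=\; \tfrac{\psi}{2} |S|.
\]
Third, each edge $(u,v) \in E_{H'}(S, \overline{S})$ is mapped by $\Pi_{H' \mapsto G}$ to a $u$-to-$v$ path in $G$ that must use at least one edge of $E_G(S, \overline{S})$. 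Since $\Pi_{H' \mapsto G}$ has congestion $C$, each such edge in $G$ is charged by at most $C$ paths, yielding
\[
|E_G(S, \overline{S})| \;\geq\; \frac{|E_{H'}(S, \overline{S})|}{C} \;\geq\; \frac{\psi |S|}{2C}.
\]
Dividing by $\min\{|S|, |\overline{S}|\} = |S|$ gives $\Psi_G(S) \geq \frac{\psi}{2C} = \Omega(\psi/C)$, as required.

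This argument is essentially routine and contains no real obstacle; the only subtle point is to make sure the accounting for missing edges is done correctly, which is why the lemma statement requires $|E(H \setminus H')| \leq \frac{\psi}{2}bn$ and restricts attention to cuts with $|S| \geq bn$ — exactly so the loss from non-embedded edges is at most half of the $\psi|S|$ lower bound on $|E_H(S,\overline{S})|$.
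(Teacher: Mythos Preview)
Your proof is correct and essentially identical to the paper's own argument: the paper also bounds $|E_{H'}(S,\overline{S})| \geq |E_H(S,\overline{S})| - |E(H \setminus H')| \geq \psi|S| - \tfrac{\psi}{2}bn \geq \tfrac{\psi}{2}|S|$ and then applies the congestion bound from \Cref{lma:folklore_embeddingEasy} to get $|E_G(S,\overline{S})| \geq \psi|S|/(2C)$.
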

\begin{proof}
Observe that for each such $(S, \overline{S})$, we have $|E_{H'}(S, \overline{S})| \geq |E_H(S, \overline{S})| - |E(H \setminus H')| \geq \psi|S| - \frac{\psi}{2}bn \geq \frac{\psi}{2}|S|$.
Using the same argument as above, the cut size of $S$ in $G$ is at least $\Abs{E_{G}(S, \overline{S})} \ge \Abs{E_{H'}(S, \overline{S})} / C \ge \psi |S| / 2C.$
\end{proof}

\paragraph{Decremental All-Pairs Shortest-Paths (APSP).} 
A decremental $\APSP$-approximate All-Pairs Shortest-Paths (APSP) data structure (abbreviated $\APSP$-APSP) is a data structure that is initialized to an $m$-edge $n$-vertex graph $G$ and supports the following operations:
\begin{itemize}
    \item $\textsc{IncreaseEdgeWeight}(u,v, \Delta)$: increases the edge weight of $(u,v)$ by $\Delta$. 
    \item $\textsc{QueryDistance}(u,v)$: for any $u,v \in V$ returns a distance estimate $\tilde{d}(u,v)$ that $\APSP$-approximates the distance from $u$ to $v$ in the current graph $G$ denoted $d_G(u,v)$, i.e. $\tilde{d}(u,v) \in [d_G(u,v), \APSP \cdot d_G(u,v)]$.
    \item $\textsc{QueryPath}(u,v)$: returns a path $\pi$ from $u$ to $v$ in the current graph $G$ of total weight $\tilde{d}(u,v)$ (that is the value of the distance estimate if queried).
\end{itemize}
We denote the total time required by the data structure to execute a series of $q$ queries and $u$ update operations on an $n$-vertex constant-degree graph by $T_{APSP}(q,u)$.

Recently, deterministic $n^{o(1)}$-approximate APSP data structures have been developed (see \cite{Chuzhoy21,BGS21}) that process any sequence of $\tilde{O}(m)$ edge weight increases in total time $m^{1+o(1)}$ while answering distance queries in time $n^{o(1)}$ time and for a path query, returns paths in time near-linear in the number of edges on the path (i.e. if it returns a path $P$, it takes at most time $|P|n^{o(1)}$. We conjecture that in the near-future, $O(\log n)$-APSP data structures are found that implement edge weight increases in time $\tilde{O}(m)$ and answers distance queries in time $\tilde{O}(1)$ and path queries in time $\tilde{O}(|P|)$. 

\section{Our Algorithm}

In this section, we present an algorithm to find sparse cuts with respect to sparsity or embed an expander into a constant-degree graph $G$. By standard reductions (given in \Cref{sec:condViaSpars} and \Cref{sec:constDegAssump}), one can translate between sparsity and conductance and remove the bounded-degree assumption, both with only a constant loss in quality. Thus, by proving the theorem below, we directly establish our main result, \Cref{thm:conductanceVersion}.

\begin{restatable}{theorem}{APSPreduction}\label{thm:fineGrainedReduction}
\label{theorem:balanceCut}
Given a graph $G$ of degree at most 10, an $\APSP$-approx decremental APSP algorithm and sparsity parameter $\psi$ and balance parameter $b \in [1/n, 1/4]$, there is an algorithm \\ $\textsc{SparseCutOrCertify}(G, \psi, b)$ (\Cref{alg:mainAlgo}) that can either
\begin{enumerate}
    \item\label{case:balanceCut} Find a cut $(S, \overline{S})$ with $|S|, |\overline{S}| \ge bn$ of sparsity $\leq \psi$, or
    \item\label{case:noBalanceCut} Certify that every cut $(X, \overline{X})$ with $|X|, |\overline{X}| = \Omega(bn)$ has sparsity
    \\ $\psi \cdot \Omega\left(\frac{1}{\APSP \log n \cdot \log(1/b) \cdot \log(\log(n)\APSP/(b\psi))}\right).$
\end{enumerate}
The algorithm is deterministic and requires the APSP data structure to undergo $O( \APSP \cdot n / \psi \log^3 n)$ updates, queries it $O(n)$ times and spends an additional $O( \APSP \cdot n / \psi \log^3 n)$ time.
\end{restatable}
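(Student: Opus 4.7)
The plan is to instantiate the framework outlined in the introduction: attempt to embed a constant-degree expander $H$ into $G$ as a multi-commodity flow of short paths routed by the APSP data structure, boosting edge weights multiplicatively after every embedding, and branch based on whether enough of $H$ is successfully embedded. Concretely, construct a $\psi_0$-expander $H$ on $V(G)$ via \cref{thm:constExp}, initialize the APSP on $G$ with unit edge weights, fix $\eps = \Theta(1/\log n)$ and a length budget $T = \Theta(\APSP \log(1/b)/\psi)$, and process the edges $(u,v) \in E(H)$ in arbitrary order: call $\textsc{QueryDistance}(u,v)$, declare $(u,v)$ \emph{unembeddable} if the returned estimate exceeds $T$, otherwise retrieve the path $\pi_{u,v}$ via $\textsc{QueryPath}$, add it to the partial embedding $\Pi_{H' \to G}$, and multiply the weight of every $e \in \pi_{u,v}$ by $1+\eps$ via $\textsc{IncreaseEdgeWeight}$.

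The \emph{certificate branch} handles the case in which at most $\psi_0 bn/2$ edges of $H$ are unembeddable. Set $\Phi := \sum_e \ell_e$ as the Garg--K\"onemann potential (initially $m$, growing by at most $\eps T$ per embedded path since every embedded path has weighted length at most $T$); after $K \le |E(H)| = O(n)$ embeddings one has $\Phi \le m + O(n \eps T)$, and since $\ell_e = (1+\eps)^{c_e}$ every edge has congestion $c_e \le \log_{1+\eps} \Phi = O(\APSP \log^3 n)$. The hypotheses of \cref{lma:folklore_embedding} are then met with $C = O(\APSP \log^3 n)$, and its conclusion delivers item~(2) of the theorem.

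The \emph{cut branch} handles the case in which more than $\psi_0 bn/2$ edges $(u,v) \in E(H)$ have weighted distance $d_\ell(u,v) > T/\APSP =: L = \Theta(\log(1/b)/\psi)$ in $G$, as enforced by the $\APSP$-approximation guarantee. Pick a pivot $s$, sort the vertices by $d_\ell(s,\cdot)$, and consider the one-parameter family of threshold cuts $S_r := \{v : d_\ell(s,v) \le r\}$. A Leighton--Rao / Garg--K\"onemann region-growing argument, restricted to the balanced window $|S_r| \in [bn,(1-b)n]$, bounds $\int |\delta S_r|\, dr$ by $\Phi$ so that pigeonhole produces a radius $r$ with $|\delta S_r|/|S_r| = O(\log(1/b)/L) \le \psi$; this $S_r$ is $b$-balanced by construction, delivering item~(1) of the theorem.

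The principal obstacle in executing this plan is the pivot selection in the cut branch: $s$ must be chosen (deterministically) so that the balanced radius window has weighted width $\Omega(L)$, which is what forces the region-growing ratio down to $O(\log(1/b)/L)$. Existence of such a pivot is the technical heart of the proof and uses that the $\Omega(bn)$ endpoints of unembeddable edges---each vertex contributes $O(1)$ such endpoints since $H$ has bounded degree---are at weighted distance $>L$ from their partners, so an averaging argument locates an endpoint $s$ that witnesses the required spread. The remaining bookkeeping is routine: $O(n)$ APSP queries, a total of $\sum_{\text{embeddings}} |\pi_{u,v}| = O(nT) = O(\APSP n \log^3 n / \psi)$ $\textsc{IncreaseEdgeWeight}$ calls (which also bounds the cumulative path length returned by the APSP), and a single Dijkstra-style sweep in the cut branch, matching the claimed running time.
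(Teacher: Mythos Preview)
Your cut branch has a genuine gap that traces back to using a single fixed length threshold $T$. With one pass you only guarantee $\Omega(bn)$ unembeddable pairs at distance $>L = \Theta(\log(1/b)/\psi)$, so the product $(\text{\#far pairs})\cdot(\text{distance})$ is only $\Theta(bn\log(1/b)/\psi)$, which collapses as $b\to 0$. The paper's \textsc{SeparateOrCertify} instead loops over geometrically increasing thresholds $2^iC$ (\Cref{lne:forLoop} of \Cref{alg:embedOrSep}): at level $i$ it checks whether more than $10n/2^i$ edges remain, and if so returns $b':=2^{-i}$ together with $|F|>10b'n$ pairs at distance $>C/b'$. Thus $|F|\cdot(\text{distance})>10nC$ \emph{independently of $b'$}. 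The same doubling is what pins $\|\bw\|_1\le 20n$ (\Cref{inv:totalWeight}): level $i$ embeds at most $O(n/2^i)$ edges along paths of weight $\le 2^iC\APSP$, so each level adds $O(n\eta C\APSP)$ and the $\log(1/b)$ levels total $O(n)$ with $\eta=\Theta(1/(C\APSP\log(1/b)))$. Your choice $\eps=\Theta(1/\log n)$ gives $\Phi=O(n\eps T)=O(n\APSP\log(1/b)/(\psi\log n))$, which is not $O(n)$ when $\psi$ is small; your stated congestion $O(\APSP\log^3 n)$ is also miscomputed (you get $\log_{1+\eps}\Phi=O(\log n\cdot\log\Phi)$, with $\APSP$ only inside the log).

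These two quantities are exactly what drive the cut extraction, and your single-pivot sweep does not recover them. The paper does not pick one pivot: it repeatedly calls \textsc{FindThinLayer} on an arbitrary surviving far pair, obtaining a set $S$ with $|\delta S|\le O(\bw(S)\log\|\bw\|_1/D)$, removes $S$, and iterates (\Cref{alg:mainAlgo}). Summing over the disjoint peeled sets gives total boundary $\le O(\|\bw\|_1\log\|\bw\|_1/D)=O(n\log n\cdot b'/C)$ with $D=C/b'$, while the peeled side reaches $\ge b'n$ vertices because otherwise some edge of $F$ survives in $H[X]$; the $b'$ cancels and sparsity is $O(\log n/C)\le\psi$. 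With your parameters no cancellation occurs: even granting $\Phi=O(n)$ and the paper's peeling you would get sparsity $O(n\log n/(L\cdot bn))=O(\psi\log n/(b\log(1/b)))\gg\psi$ for small $b$. Your claimed bound $|\delta S_r|/|S_r|=O(\log(1/b)/L)$ is also unjustified: the co-area inequality bounds $\int|\delta S_r|\,dr$ by the total edge weight, and in a weighted graph there is no differential inequality linking the vertex count $|S_r|$ to $|\delta S_r|$, so the multiplicative region-growing you invoke does not apply; nor does any averaging over endpoints of far pairs force a single pivot whose balanced window has width $\Omega(L)$.
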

\begin{remark}
Our algorithm ensures that the total number of edges summed across all queried paths is bound by $O( \APSP \cdot n / \psi \log^3 n)$.
\end{remark}

The algorithm contains two phases.
The first phase tries to embed an $\Omega(1)$-expander into the input graph $G$ with congestion $\O(1/\psi).$
Let $F$ be the subset of expander-edges the algorithm cannot embed.
If $|F| = O(b n)$, i.e. the algorithm embed all but $O(b n)$ edges, \Cref{lma:folklore_embedding} ensures that every $b$-balanced cut has sparsity $\wt{\Omega}(\psi).$
Otherwise, $|F| = \Omega(b n)$ and the algorithm outputs an edge weight $\bw$ such that every $(u, v) \in F$ are far apart w.r.t. $\bw.$
In this case, the second phase is initiated to extract a sparse $\Omega(b)$-balanced cut from these far-apart pairs of vertices.

\subsection{An Algorithm to Separate Or Certify}
\label{sec:seporcert}
First, we present the algorithm for the first phase that either embeds a large portion of an expander or finds a large set of far-apart vertex-pairs w.r.t. some edge weights $\bw.$

\begin{lemma}
\label{lemma:SeparateOrCertifyBalanced}
Given an $\APSP$-APSP data structure, two graphs $G$ and $H$ over the same vertex set $V$, a congestion parameter $C \in [1, n]$, and a balance parameter $b \in [1/n, 1/2]$. The algorithm $\textsc{SeparateOrCertify}(G, H, C, b)$ (\Cref{alg:embedOrSep}) outputs either
\begin{enumerate}
    \item A set of weights $\bwInt \in \mathbb{R}_{\geq 1}^{E(G)}$ with $\|\bwInt\|_1 \leq 20n$, a number $b' \in [b, 1/2]$, and a subset of edges $F \subseteq E(H)$ with $|F| > 10 b' n$ such that 
    \[ \forall (u,v) \in F, \quad \dist_{\bwInt}(u,v) > \frac{C}{b'}, \text{ or }\]
    
    \item A graph $H' \subseteq H$ with $|E(H) \setminus E(H')| \leq 10bn$ and an embedding $\Pi_{H' \mapsto G}$ that maps each edge $(u,v)$ in $H'$ to a $uv$-path in $G$ with congestion $O(C \cdot \APSP \cdot \log(1/b) \cdot \log(C \cdot \APSP / b))$.
\end{enumerate}
The algorithm is deterministic and requires the APSP data structure to undergo $O(C \APSP n \log^2 n)$ edge updates and $O(n)$ distance queries along with additional $O(C \APSP n \log^2 n)$ time.
    
\end{lemma}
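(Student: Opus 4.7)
The plan is to deploy a multiplicative-weights scheme on top of the decremental APSP oracle. Initialize $\bw = \vone$ on $E(G)$ and let $J = \lceil \log_2(1/b)\rceil$; index scales $j = 1,\dots,J$ by candidate balance values $b'_j = 2^{-j}$, so that $b'_J = b$. At scale $j$, set a distance threshold $\tau_j = \Theta(\APSP \cdot C / b'_j)$ and use a scale-independent multiplicative update factor $\eta = \Theta(1/(\APSP C \log(1/b)))$. Within a scale, walk through every edge $(u,v) \in E(H)$ not yet embedded: query $\tilde{d}(u,v)$; if $\tilde{d}(u,v) \leq \tau_j$, retrieve the returned path $P$ (of weighted length $\leq \tau_j$), commit it as the embedding image of $(u,v)$, and push $\bw_e \leftarrow (1+\eta)\bw_e$ for every $e \in P$ into the APSP data structure via \textsc{IncreaseEdgeWeight}. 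Otherwise the edge stays unembedded.

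\textbf{Stopping and case split.} At the end of each scale $j$, let $F_j$ be the edges of $H$ still unembedded. If $|F_j| > 10 b'_j n$, halt and output case 1 with $b' = b'_j$ and $F = F_j$. Each edge $(u,v) \in F_j$ failed the threshold test at scale $j$, so $\tilde{d}(u,v) > \tau_j$ at that moment, and the $\APSP$-approximation guarantee yields $\dist_{\bw}(u,v) > \tau_j/\APSP = C/b'_j$; this is preserved at termination by the monotonicity of $\bw$. If the algorithm completes scale $J$ without stopping, then $|F_J| \leq 10 b n$ and we return case 2 with $H' = H \setminus F_J$ and the accumulated embedding $\Pi_{H' \mapsto G}$.

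\textbf{Three technical estimates.} First, the weight budget $\|\bw\|_1 \leq 20n$: the stopping rule at scale $j-1$ guarantees $|F_{j-1}| \leq 10 b'_{j-1} n = 20 b'_j n$, so at most $|E_j| \leq 20 b'_j n$ embeddings take place at scale $j$. Each embedding increments $\|\bw\|_1$ by at most $\eta \tau_j = \Theta(1/(b'_j \log(1/b)))$, contributing $O(n/\log(1/b))$ per scale and $O(n)$ across all $J$ scales. Second, the distance certificate for case 1 was argued above. Third, for congestion, the pointwise invariant $\bw_e \leq \|\bw\|_1 \leq 20n$ combined with $\bw_e \geq (1+\eta)^{k_e}$ gives $k_e \leq \log(20n)/\eta = O(\APSP C \log(1/b) \log(\APSP C/b))$, matching the required bound by \Cref{lma:folklore_embedding}. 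The running-time bound follows from $\sum_j |E_j| \tau_j = O(\APSP C n \log^2 n)$ data-structure updates (since the number of edges on each returned path is at most $\tau_j$) and $O(n)$ distance queries, with the remaining bookkeeping of the same order.

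\textbf{Main obstacle.} The delicate part is the triple coupling of $\eta$, $\tau_j$, and the scales $b'_j$ so that the weight budget holds simultaneously across all scales while per-edge congestion remains controlled. The naive per-scale choice $\eta_j = \Theta(1/\tau_j)$ yields the tightest single-scale bound but lets $k_e$ blow up by a factor of $1/b$ when summed across scales, because the constraint $\sum_j \eta_j k_{e,j} \leq O(\log n)$ is dominated by the smallest $\eta_j$. The remedy is to pin $\eta$ at the level matched to the finest scale, sacrificing a $\log(1/b)$ factor in per-scale weight growth that is reabsorbed by the geometric thinning of $|F_j|$ enforced by the stopping rule. This calibration is precisely what produces the $\log(1/b)\cdot \log(\APSP C/b)$ form of the final congestion bound.
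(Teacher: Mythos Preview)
Your approach is essentially identical to the paper's: a multiplicative-weights scheme with geometrically increasing distance thresholds $\tau_j = 2^j\, C\,\APSP$ and a single step size $\eta = \Theta(1/(C\,\APSP\log(1/b)))$, terminating at the first scale where the remaining unembedded set exceeds $10\,b'_j\,n$. Your weight-budget and case-1 distance-certificate arguments match the paper's Invariant and termination analysis almost verbatim.

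There is one slip in the congestion step. From $\bw_e \leq \|\bw\|_1 \leq 20n$ you correctly get $k_e \leq \log(20n)/\eta = O\bigl(C\,\APSP\log(1/b)\log n\bigr)$, but this is \emph{not} the claimed $O\bigl(C\,\APSP\log(1/b)\log(C\,\APSP/b)\bigr)$: the two differ whenever $C\,\APSP/b \ll n$ (e.g.\ $C=\APSP=1$ and $b$ constant), which the lemma's parameter ranges allow. The paper gets the sharper form by a different pointwise bound: since every accepted path has weighted length at most the largest threshold $C\,\APSP/b$, no edge with $\bw_e > C\,\APSP/b$ can ever lie on such a path, hence $\bw_e \leq (1+\eta)\,C\,\APSP/b$ throughout and $k_e \leq O(\log(C\,\APSP/b)/\eta)$. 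Replacing your $\ell_1$ bound by this threshold bound fixes the argument. (Also, the reference to \Cref{lma:folklore_embedding} at that point is misplaced; that lemma converts a congestion bound into an expansion lower bound and is not involved in bounding the congestion itself.)
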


\begin{algorithm}
\DontPrintSemicolon
$H' = (V, \emptyset)$; $\Pi_{H' \mapsto G} \gets \emptyset$; $\bw \gets \mathbf{1}^{|E(G)|}$; $\eta \gets  \frac{1}{ 4 C \APSP \log_2(10/b)}$.\label{lne:init}\\
Maintain an $\APSP$-approximate APSP data structure on $G$ weighted by $\bw$.\\
\For(\label{lne:forLoop}){$i = 0, 1, \ldots, \lfloor \log_2(1 / b) \rfloor$}{
    \ForEach(\label{lne:foreachUnrouted}){$e = (u,v) \in E(H) \setminus E(H')$}{
        \If(\label{lne:ifReallyEmbed}){$\textsc{APSP}.\textsc{QueryDist}(u,v) \leq 2^{i} \cdot C \APSP$}{
            Add $e$ to $H'$; $\Pi_{H' \mapsto G}(e) \gets \textsc{APSP}.\textsc{QueryPath}(u,v)$. \\
            \ForEach{$f \in \Pi_{H' \mapsto G}(e)$}{
             $\textsc{APSP}.\textsc{IncreaseWeight}(e, \eta \bw_e)$;\label{lne:increaseWeights}
             $\bw_e \gets (1+\eta) \bw_e$.
            }
        }
    }
    \lIf{$|E(H) \setminus E(H')| > 10 n / 2^i$}{
        \Return $(\bw, 2^{-i}, E(H) \setminus E(H'))$. \label{lne:terminateCase2} \label{lne:bPrimeDefn}
    }
}
\Return $(H', \Pi_{H' \mapsto G})$. \label{lne:terminateEmbed}
\caption{$\textsc{SeparateOrCertify}(G, H, C, b)$}\label{alg:embedOrSep}
\end{algorithm}

\paragraph{The Algorithm.} \Cref{alg:embedOrSep} implements $\textsc{SeparateOrCertify}(G, H, C, b)$. Here, the task of finding an embedding of $H$ into $G$ is interpreted as a multicommodity flow problem, that is each edge $(u,v) \in H$ gives rise to the demand to route one unit of flow from $u$ to $v$. Later, we use a $\psi_0$-expander in place of $H$. 

The goal of the algorithm is to find such an embedding/ multicommodity flow with small congestion which combined with our choice of $H$ certifies that $G$ is a good (almost) expander (i.e. contains no balanced sparse cut). Here, we guess the congestion to be roughly $C$ and want to enforce $\Cong(\Pi_{H \mapsto G}) \leq C$. In fact, we even provide a slightly tighter analysis.

To achieve this goal, we use a technique which is an instance of the \emph{Multiplicative Weight Update (MWU)} framework. Initially, we define a uniform weight function $\bw$ with weights over $G$. We try to embed each edge $(u,v) \in E(H)$ using a short $uv$-path $P_{uv}$ in $G$ with respect to $\bw$. Whenever we embed an edge $(u,v)$ in such a way and the path $P_{uv}$ contains an edge $e \in E(G)$, we increase the weight $\bw_e$ by a multiplicative factor $(1+\eta)$. Naturally, after $t$ edges have been embedded by using the edge $e$, we have scaled up the weight of $e$ by a factor of $(1+\eta)^t$. Using $e^x \le (1 + 2x), x \in [0, 1]$, and setting $\eta \approx C$ ensures that the weight $\bw_e$ approaches a large polynomial in $n$ for $t \gg 2\eta \log n$ (which again is $\approx C$). 

At the same time, the algorithm only embeds edges $(u,v) \in E(H)$ if the distance between the endpoints in $G$ w.r.t. $\bw$ is small. This ensures that $\|\bw\|_1 = O(n \log(1/b))$ and that we never use an edge $e$ into which many embedding paths are already routed. 

More precisely, we proceed in rounds to embed edges in $H$. At later rounds (i.e. when $i$ large), we have already embed a large number of edges in $H$. Since the number of remaining edges is small, we allow for them to be embed with slightly longer paths which still lets us argue that $\|\bw\|_1$ is increased by at most $O(n)$ in the current round. If in any round, it is not possible to embed many of the remaining edges with paths of weight at most the current threshold, we can simply return these edges and end up in the first scenario.

\paragraph{Correctness (Returning in \Cref{lne:terminateCase2}).} We start by proving the following claim which then immediately establishes correctness if \Cref{alg:embedOrSep} terminates at \Cref{lne:terminateCase2} (i.e. in the second scenario). 

\begin{invariant}\label{inv:totalWeight}
After the $i$-th iteration of the for-loop in \Cref{lne:forLoop}, we have $\|\bw\|_1 \leq 10n (1 + 2 \eta C \APSP \cdot (i+1)) \leq 20n$.
\end{invariant}
\begin{proof}
Initially, $\|\bw\|_1 = \|\mathbf{1}^{|E(G)|}\|_1 \leq 10n$.

To gauge the increase in $\|\bw\|_1$ during the $i$-th iteration of the for-loop, consider the effect of embedding a new edge $e$ in the foreach-loop starting in \Cref{lne:foreachUnrouted} (we only consider such iterations if the if-statement in \Cref{lne:ifReallyEmbed} evaluates true as otherwise $\bw$ does not change). 
Letting $\bw^{OLD}$ denote $\bw$ just before the foreach-loop iteration and $\bw^{NEW}$ right after.
We clearly have that $\|\bw^{NEW}\|_1 = \|\bw^{OLD}\| + \eta \cdot \bw^{OLD}(\Pi_{H' \mapsto G}(e))$ from \Cref{lne:increaseWeights}.
But since the if-statement was true, we have that $\bw^{OLD}(\Pi_{H' \mapsto G}(e)) \leq 2^i \cdot C\APSP$.
We conclude that each edge that is newly embed increases $\|\bw\|_1$ by at most $\eta \cdot 2^i \cdot C\APSP$.

At the beginning of the $i$-th iteration of the for-loop, there are at most $10n/2^i$ edges in in $E(H) \setminus E(H')$.
At the very first iteration $i = 0$, $|E(H)| \le 20 n$ as the max degree of $H$ is at most $10.$
Later, $|E(H) \setminus E(H')| \le 10n / 2^{i-1}$ holds or otherwise the algorithm would terminate after the $(i-1)$-th iteration in \Cref{lne:terminateCase2}.
Thus, during the $i$-th iteration, the foreach-loop in \Cref{lne:foreachUnrouted} iterates over at most $10n / 2^{i-1}$ edges as well.
We can bound the total increase of $\|\bw\|_1$ during the $i$-th iteration by 
\[\frac{10n}{2^{i-1}} \cdot \eta \cdot 2^i \cdot C\APSP = 20n \cdot \eta C\APSP.\]

The total number of iterations is at most $\lfloor \log_2(1/b) \rfloor + 1.$
This establish the second inequality using the definition of $\eta$.

\end{proof}

Note that for every edge $(u,v)$ that is in $E(H) \setminus E(H')$ when the algorithm returns in \Cref{lne:terminateCase2}, the preceding foreach-loop iterated over $(u,v)$ and found that $\textsc{APSP}.\textsc{QueryDist}(u,v) > 2^i \cdot C\APSP$ (as otherwise $(u,v)$ would have been added to $E(H')$). But this implies that 
$\dist_{\bw}(u,v) > 2^i \cdot C = C / b'$
by our choice of $b'$.
To establish correctness, it only remains to use the if-condition preceding \Cref{lne:terminateCase2} and observe that the condition does not hold when $i = 0$.

\paragraph{Correctness (Returning in \Cref{lne:terminateEmbed}).} It is straight-forward to see from \Cref{alg:embedOrSep} that $\Pi_{H' \mapsto G}$ is a correct embedding from $H'$ to $G$ and that $|E(H) \setminus E(H')| \leq 10bn$. It thus only remains to bound the congestion of $\Pi_{H' \mapsto G}$.

\begin{lemma}\label{lma:congestion}
The congestion of $\Pi_{H' \mapsto G}$ is at most $\frac{2\log(2 C \APSP/ b)}{\eta}$.
\end{lemma}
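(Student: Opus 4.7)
The plan is to bound the congestion edge-by-edge, by tracking how the weight $\bw_e$ of a single edge $e \in E(G)$ evolves over the course of \Cref{alg:embedOrSep}. The first observation is that the weights are monotonically non-decreasing: the only modification to $\bw$ happens in \Cref{lne:increaseWeights}, where $\bw_e$ is multiplied by $(1+\eta)$ each time $e$ is picked up as part of a newly embedded path. Consequently, if $e$ ends up being used by exactly $k$ embedding paths in total, then its final weight is $\bw_e = (1+\eta)^k$, since $\bw_e$ was initialized to $1$ in \Cref{lne:init}.

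Next, I would extract an upper bound on $\bw_e$ from the if-condition in \Cref{lne:ifReallyEmbed}. Whenever $e$ is added to an embedding path during the $i$-th iteration of the outer for-loop, the total weight of that path is at most $2^{i}\cdot C\APSP$, and in particular the weight of $e$ at that moment is at most $2^{i}\cdot C\APSP \leq C\APSP/b$, using $i \leq \log_2(1/b)$. To account for the very last multiplicative update on $e$, I then inflate this by one factor of $(1+\eta) \leq 2$; this inflation is valid because $\eta \leq 1$ by the definition in \Cref{lne:init} (here one checks that $4C\APSP\log_2(10/b) \geq 1$). Thus $\bw_e \leq 2C\APSP/b$ holds throughout the algorithm.

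Combining the two estimates gives $(1+\eta)^k \leq 2C\APSP/b$. Taking logarithms and applying the elementary inequality $\log(1+\eta) \geq \eta/2$ valid on $\eta \in [0,1]$ yields
\[
k \;\leq\; \frac{\log(2C\APSP/b)}{\log(1+\eta)} \;\leq\; \frac{2\log(2C\APSP/b)}{\eta},
\]
which is exactly the claimed bound on $\congestion(\Pi_{H' \mapsto G})$.

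The argument is a standard multiplicative-weights potential step, so no serious obstacle is expected; the only slightly delicate point is making sure the upper bound $\bw_e \leq 2C\APSP/b$ holds \emph{after} the final weight-increase on $e$ rather than only \emph{before} it, which is precisely what the extra factor of $(1+\eta) \leq 2$ handles.
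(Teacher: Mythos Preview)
Your proposal is correct and follows essentially the same approach as the paper's proof: both bound the final weight of an edge below by $(1+\eta)^k$ and above by $(1+\eta)\cdot C\APSP/b \le 2C\APSP/b$ using the threshold in \Cref{lne:ifReallyEmbed}, then invoke the equivalent inequalities $\log(1+\eta)\ge \eta/2$ (your version) or $e^{\eta/2}\le 1+\eta$ (the paper's version) to solve for $k$. The only cosmetic difference is that the paper phrases the final step as a contradiction, whereas you derive the bound directly.
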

\begin{proof}
Let us fix any edge $e \in E(G)$. Note that each time we add an embedding path in the foreach-loop starting in \Cref{lne:foreachUnrouted} that contains $e$, we increase the weight $\bw_e$ to $(1+\eta)\bw_e$. Since initially, $\bw_e = 1$, we have that after $t$ times that the edge $e$ was used to embed an edge in the foreach-loop, we have that $\bw_e = (1+\eta)^t \geq e^{t\eta/2}$ since $e^x \leq 1+2x$ for $x \in [0,1]$. In particular, if the algorithm embeds $t$ times into $e$ for $t > \frac{2\log(2 C \APSP / b)}{\eta}$, then at the end of the algorithm, we would have $\bw_e > \frac{2 C \APSP}{b}$.

However, note that by the if-condition in \Cref{lne:ifReallyEmbed}, we never embed into an edge $e$ that has weight more than $2^{\log_2(1/b)} \cdot C \APSP = \frac{C \APSP}{b}$ since otherwise the path using this edge has higher weight. We can thus conclude that at the end of the algorithm, $\bw_e \leq (1+\eta)\frac{C \APSP}{b} \leq \frac{2C \APSP}{b}$, which leads to a contradiction. 
\end{proof}

\paragraph{Run  time Analysis.} The for-loop of the algorithm runs at most $O(\log (1/b))$ iterations and in the $i^{th}$ iteration at most $O(n/2^i)$ edges are iterated over in the foreach-loop starting in \Cref{lne:foreachUnrouted}. Thus, the total number of queries to the APSP data structure can be bound by $O(\sum_i n/2^i) = O(n)$. 

The time the algorithm spends updating the weights in \Cref{lne:increaseWeights} can be bound by observing that each edge $e$ has its weight increased only after an additional embedding path was added through $e$; but the congestion is bound by $O(\log n/\eta)$ by \Cref{lma:congestion}, thus the foreach-loop is executed at most $O(n\log n/\eta)$ times over the entire course of the algorithm. This concludes our analysis of the number of updates to the APSP data structure. The runtime analysis of the algorithm follows along the same line of reasoning.

\begin{remark}
\label{rem:extractExpanderSW19}
Our algorithm can be extended to compute expander decompositions, following the approach of \cite{SaranurakW18}.
We refer the reader to this paper for additional background and the necessary definitions.
For readers familiar with \cite{SaranurakW18}, we briefly describe the key step we need to implement:
When  $\textsc{SeparateOrCertify}(G,H,C,b)$ certifies that most edges in the expander $H$ can be embedded into $G$ (and hence by \Cref{lma:folklore_embedding} there are no sparse balanced cuts in $G$) then we need to be able to extract a large expander from $G$ so that we only need to recurse on a small (potentially) non-expanding part
To find an induced subgraph with large expansion, we first produce a new graph $G'$ by adding the edges $E(H)\setminus E(H')$ to $G$.
This ensures that $G'$ is a good expander. We then use the expander pruning of \cite{SaranurakW18} to delete the same edges $E(H)\setminus E(H')$ from $G'$, resulting in a large leftover expander $G''$ with vertex set $V''$. By construction $G[V'']$ is now a large expander.
\end{remark}

\subsection{Extracting the Sparsest Cut}

In order to prove \Cref{theorem:balanceCut}, we now have to show how to extract a sparsest cut from the weight function that is returned in case no embedding is found. We point out that in order to do so it is significantly more convenient to work with an integral weight function $\bw$. We therefore round the weight function that we obtain \Cref{lemma:SeparateOrCertifyBalanced} up which might result in $\|\bw\|_1$ being at most twice as large as stated.

We use the following auxiliary algorithm that finds a cut with few edges crossing given any two vertices at large distance. 

\begin{restatable}{claim}{sepLem}\label{clm:thinLayer}
The procedure $\textsc{FindThinLayer}(G, \bwInt, u, v, D)$ takes a graph $G$ weighted by $\bwInt \in \mathbb{N}_{\geq 1}^{E(G)}$ and two vertices $u,v$ such that $\dist_{\bwInt}(u,v) > D$ for some integer $D > 4 \log_2 \|\bwInt\|_1$. It returns a set of vertices $S \neq \emptyset$ such that $|S| \leq |V|/2$ and $|E_G(S, V \setminus S)| \leq \frac{4  \bwInt(S) \log_2 \|\bwInt\|_1 }{D}$. The algorithm runs in time $O(|E_G(S)| \log |E_G(S)|)$.
\end{restatable}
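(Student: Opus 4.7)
My plan is to realize the classical Leighton--Rao ball-growing argument on the unweighted subdivided graph $\hat{G}$ obtained from $G$ by replacing each edge $e$ with a path of $\bwInt_e$ unit-length edges. This graph has exactly $\|\bwInt\|_1$ edges, preserves $\bwInt$-distances so that $\dist_{\hat{G}}(u,v) > D$, and, crucially, its internal subdivision vertices have $\hat{G}$-degree $2$ and can be reached only via the two original endpoints of their $G$-edge. I will grow BFS balls $\hat{B}(r) = \{x : \dist_{\hat{G}}(u,x) \le r\}$ simultaneously from $u$ and from $v$, up to integer radius $D/2$. Since $\dist_{\bwInt}(u,v) > D$, these two half-balls are vertex-disjoint, so the original-vertex set $\hat{B}(r) \cap V$ of at least one of them has size at most $|V|/2$; that ball will produce the output $S$.

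For each side, take the volume $V(r) := \sum_{x \in \hat{B}(r)} \deg_{\hat{G}}(x)$, which satisfies $V(0) \ge 1$, $V(D/2) \le 2\|\bwInt\|_1$, and the recurrence $V(r{+}1) - V(r) \ge \hat{C}(r) := |E_{\hat{G}}(\hat{B}(r), V(\hat{G})\setminus\hat{B}(r))|$ (each cut edge's outer endpoint sits at distance exactly $r{+}1$). Assume for contradiction that $\hat{C}(r) > \frac{4 V(r) \log_2 \|\bwInt\|_1}{D}$ for every integer $r \in [0, D/2)$. Iterating the recurrence then yields
\[
V(D/2) > V(0) \cdot \Big(1 + \tfrac{4 \log_2 \|\bwInt\|_1}{D}\Big)^{D/2} \ge \|\bwInt\|_1^{2}
\]
via $(1{+}x)^{1/x} \ge 2$ on $(0,1]$, where $x \le 1$ is ensured by the hypothesis $D > 4 \log_2 \|\bwInt\|_1$. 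This contradicts $V(D/2) \le 2\|\bwInt\|_1$, so some integer $r^* \in [0, D/2)$ must satisfy $\hat{C}(r^*) \le \frac{4 V(r^*) \log_2 \|\bwInt\|_1}{D}$.

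To translate back to $G$, set $S := \hat{B}(r^*) \cap V$. Every $G$-edge in $E_G(S, V\setminus S)$ has a subdivision path that must exit $\hat{B}(r^*)$, so it contributes at least one crossing to $\hat{C}(r^*)$; hence $|E_G(S, V\setminus S)| \le \hat{C}(r^*)$. For the volume side, the degree-$2$ property of internal subdivision vertices forces such a vertex to lie in $\hat{B}(r^*)$ only when its $G$-edge is incident to $S$; a direct degree count then bounds $V(r^*)$ by (a constant multiple of) $\bwInt(S) := \sum_{e : e \cap S \neq \emptyset} \bwInt_e$. Combining these two estimates gives the desired sparsity inequality $|E_G(S, V\setminus S)| \le \frac{4 \bwInt(S) \log_2 \|\bwInt\|_1}{D}$, with a careful choice of the volume functional pinning the constant at $4$ as in the statement.

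For the running time, I never need to materialize $\hat{G}$: the two ball-growings are just Dijkstra on $G$ with weights $\bwInt$, interleaved so that whichever side first discovers a thin cut with $|S| \le |V|/2$ terminates the procedure. Only edges incident to the returned $S$ are ever examined, giving $O(|E_G(S)| \log |E_G(S)|)$ time with a binary heap. The main subtlety I expect is in the translation step: one has to carefully relate the cut and volume measured in the subdivided $\hat{G}$ back to $|E_G(S, V\setminus S)|$ and $\bwInt(S)$ in the original $G$, exploiting the fact that an internal subdivision vertex is ``private'' to its $G$-edge. The ball-growing contradiction and the Dijkstra analysis are otherwise routine.
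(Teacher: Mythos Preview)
Your proposal is correct and is essentially the same ball-growing argument the paper gives, just phrased via an explicit unweighted subdivision $\hat G$ rather than the paper's direct potential $\Phi(z,r)$ on the weighted graph (where an edge $(x,y)$ contributes $|\dist(z,x)-\dist(z,y)|$ if both endpoints lie in $B(z,r)$ and $r-\dist(z,\text{closer endpoint})$ if only one does). One remark: your stated degree-volume $V(r)=\sum_{x\in\hat B(r)}\deg_{\hat G}(x)$ only yields $V(r)\le 2\,\bwInt(S)$, hence constant $8$; the ``careful choice of volume functional'' you allude to that recovers the constant $4$ is exactly the paper's $\Phi$, which satisfies $\Phi(z,r{+}1)-\Phi(z,r)=|E_G(B(z,r),V\setminus B(z,r))|$ on the nose and $\Phi(z,r)\le\bwInt(E(B(z,r)))$ with no slack, so no translation step is needed.
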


Given this auxiliary algorithm, we can state the final algorithm and prove our main result, \Cref{theorem:balanceCut}. As described before, we use the algorithm $\textsc{SeparateOrCertify}(G, H, C, \hat{b})$ with a constant degree, constant sparsity expander $H$. It is straight-forward to conclude that $G$ contains no balanced sparse cuts, if the procedure can embed $H$. 

Otherwise, we take the weight function and repeatedly find a separator between the endpoints of edges in $F$ that are far from each other (using the auxiliary algorithm). Note that if there are roughly $b'n$ edges in $F$ at distance roughly $C/b'$, then using the auxiliary algorithm repeatedly with $D \approx C/b'$, produces a cut where the smaller side has $\Omega(|F|) = \Omega(b'n)$ vertices. Using the guarantees from the auxiliary procedure, we further have that the number of edges in the induced cut are at most $\tilde{O}(b'n/ C)$. Thus, the sparsity of the cut must be $\tilde{O}(1/C)$ where $C \approx 1/\psi$ by our choice of parameters.

\begin{algorithm}
$H \gets \textsc{ConstDegExpander}(|V(G)|)$; $C \gets 320\log n/\psi$; \\ 
\If{$\textsc{SeparateOrCertify}(G, H, C, 2b)$ returns $(H', \Pi_{H' \mapsto G})$}{
    \Return $(H', \Pi_{H' \mapsto G})$.\label{lne:retTrivial}
}\Else(\tcp*[h]{i.e. if it returns $(\bwInt, b', F)$}){
    $\wh{\bw} \gets \lceil \bwInt \rceil$. \\
    $X \gets V(G)$.\\
    $D \gets 2C / b'$.\\
    \While{$\exists (u,v) \in H[X] \cap F$ and $|V \setminus X| \le n / 4$}{
        \tcp*[h]{$\dist_{\wh{\bw}}(u, v) > D$} \\
         $S \gets \textsc{FindThinLayer}(G[X], \wh{\bw}, u, v,  D)$.\\
         $X \gets X \setminus S$.
    }
    \Return $V \setminus X$. \label{lne:returnSparseCut}
}
\caption{$\textsc{SparseCutOrCertify}(G, \psi, b)$}\label{alg:mainAlgo}
\end{algorithm}

\APSPreduction*

\begin{proof}%
The case where \Cref{alg:mainAlgo} returns in \Cref{lne:retTrivial} follows directly from \Cref{lemma:SeparateOrCertifyBalanced}, \Cref{thm:constExp} and \Cref{lma:folklore_embedding}.
Let us therefore analyze the remaining case where the algorithm returns in \Cref{lne:returnSparseCut} (the while-loop can be seen to terminate since each iteration shrinks the set $X$ by \Cref{clm:thinLayer} and $X = \emptyset$ trivially has no two vertices at far distance). 

We first prove that the final set $V \setminus X$ has size $b'n \leq |V \setminus X| \leq \frac{3}{4}n$:
\begin{itemize}
    \item $b'n \leq |V \setminus X|$: Initially, $H[X] = H$ and $F \subseteq H$ contains more than $10b' n$ edges by \Cref{lemma:SeparateOrCertifyBalanced}.
    Every edge $(u, v) \in F$ has $\dist_{\wh{\bw}}(u,v) \ge \dist_{\bwInt}(u,v) > C / b'.$ Since the maximum degree of $H$ is $10$, as long as $|V \setminus X| < b'n$, $H[X]$ contains all but $10b'n$ edges from $H.$
    Thus, $H[X] \cap F$ is not empty and the while-loop continues.
    We conclude that $b'n \leq |V \setminus X|$ holds.
    \item $|V \setminus X| \leq \frac{3}{4}n$: Since the while-loop condition allows only invocations of $\textsc{FindThinLayer}$ if $|V \setminus X| \leq n/4$, and since this procedure returns the smaller side of the cut it produces by \Cref{clm:thinLayer} (which is found on $G[X]$), we can conclude that at the end of the algorithm $|V \setminus X| \leq n/4 + n/2 \leq \frac{3}{4}n$.
\end{itemize}
This indicates that $|X| \ge n / 4 \ge b'n / 2 \ge bn$ since $2b \le b' \le \frac{1}{2}.$

Next, we bound the sparsity of the cut $V \setminus X.$
Let $S_1, S_2, \ldots, S_k$ be the sets returned by procedure $\textsc{FindThinLayer}$ one after another over the course of the while-loop, such that $V \setminus X = \cup S_i$. We first observe that these sets are vertex-disjoint since after the $i$-th iteration, the procedure $\textsc{FindThinLayer}$ is invoked on the graph $G_i = G[V \setminus (S_1 \cup \ldots \cup S_i)]$ to find $S_{i+1}$.
Further, the final cut $(X, V \setminus X)$ contains only edges that were previously in a thin layer, i.e. \[
E_G(X, V \setminus X) \subseteq \bigcup_i E_{G_i}(V \setminus (S_1 \cup \ldots \cup S_i), S_i).
\]

It remains to use the guarantee of \Cref{clm:thinLayer} that for each $S_i$, we have $|E_{G_i}(S_i, V \setminus (S_1 \cup \ldots \cup S_i))| \leq \frac{4  \wh{\bw}(S_i) \log_2 \|\bwInt\|_1 }{D}$ and by the vertex-disjointness of $S_1, S_2, \ldots, S_k$, we thus have that
\begin{align*}
    |E_G(X, V \setminus X)| &\leq | \bigcup_i E_{G_i}(S_i, V \setminus (S_1 \cup \ldots \cup S_i))| \leq \sum_i \frac{4  \wh{\bw}(S_i) \log \|\wh{\bw}\|_1 }{D} \\
    &\leq \frac{4  \|\wh{\bw}\|_1 \log \|\wh{\bw}\|_1 }{D} = \frac{8 n \cdot b' \log n}{C}
\end{align*}
where we use $\|\bwInt\| \leq 20n$ from \Cref{theorem:balanceCut} and $\wh{\bw}$ is obtained from rounding up $\bwInt$, and our choice of $D$. Since we have shown that $|X|, |V \setminus X| \geq b'n / 2 \geq bn$, choosing $C = 320\log n/\psi$, we have $\Psi(V \setminus X) = \Psi(X) \leq \psi$, as desired.

We use the disjointness of $S_1, S_2, \ldots, S_k$ to argue that the total time spend in procedure $\textsc{FindThinLayer}$ can be bound by $O(n \log n)$. The remainder of the runtime analysis is trivial given \Cref{lemma:SeparateOrCertifyBalanced}.
\end{proof}

It remains to provide an implementation of $\textsc{FindThinLayer}(G, \bwInt, u, v, D)$ and prove \Cref{clm:thinLayer}.
The algorithm follows a simple ball-growing procedure.
It grows balls from both endpoints $u$ and $v.$
Because the distance between $u$ and $v$ are guaranteed to be large, the procedure takes longer time.
However, these two balls cannot be larger than the entire graph.
There must be a moment that one of the ball grows only by a thin layer.

\sepLem*
\begin{proof}
Since $\dist_{\bwInt}(u,v) > D$ by assumption, we have that at least one of $u$ and $v$ have their ball to radius $D/2$ contain at most half the vertices in $G$. More formally, for some $z \in \{u,v\}$, $|B_{G, \bwInt}(z, D/2)| \leq |V|/2$. We claim that there is a radius $0 < r \leq D/2$, such that taking $S = B(z, r)$ satisfies the above guarantees. 
For this proof, it is convenient to define the following auxiliary function $\Phi(z,r) = \sum_{e \in E} \Phi(z,r,e)$ where the latter functions are defined for all edges $e = (x,y) \in E$ by
\[
   \Phi(z, r, e) = 
    \begin{cases} 
         |\dist_{\bwInt}(z,x) - \dist_{\bwInt}(z,y)|  & \text{if } \dist_{\bwInt}(z,x)  \leq r \text{ and } \dist_{\bwInt}(z,y) \leq r\\
        r - \dist_{\bwInt}(z,x) & \text{if } \dist_{\bwInt}(z,x) \leq r < \dist_{\bwInt}(z,y)\\
        r - \dist_{\bwInt}(z,y) & \text{if } \dist_{\bwInt}(z,y) \leq r < \dist_{\bwInt}(z,x)\\
        0 & \text{otherwise}
    \end{cases}
\]
Here, an edge $e = (x,y) \in E(G)$ contributes the distance between its two endpoints $x$ and $y$ (which is at most $\bw_e$) to $\Phi(z,r,e)$ if both endpoints are fully contained in the ball $B(z,r)$. If neither of the endpoints are contained it contributes $0$. Otherwise, $e = (x,y)$ contributes the distance of the endpoint closer to $z$ to the boundary of the ball. In both cases, $0 \leq \Phi(z, r, e) \leq \bwInt_e$. This means in particular that the weight of edges incident to $B(z,r)$ denoted by $\bwInt(E(B(z,r)))$ is always greater-equal to $\Phi(z,r)$, i.e. $\bwInt(E(B(z,r))) \geq \Phi(z,r)$ for all $r$.

Note further that $\Phi(z, r+1) - \Phi(z,r)$ is exactly $|E_G(B(z, r), V \setminus B(z, r))|$, the number of edges that leave $B(z,r)$. 
To see this, observe that an edge $e = (x, y)$ contributes $1$ to the difference if $\dist_{\bwInt}(x, z) \le r < r+1 \le \dist_{\bwInt}(y, z)$ holds, i.e. $e$ leaves $B(z, r)$.
Otherwise, the contribution of $e$ are identical in both $\Phi(z, r)$ and $\Phi(z, r+1).$
Here we use that $\bwInt$ is integral and so are distances in $G$.

Given this set-up, assume for contradiction that for all $0 < r < D/2$, we have 
\[\Phi(z,r+1) > \left(1+ \frac{4 \log_2 \|\bwInt\|_1}{D}\right) \Phi(z,r).\] 
By induction we have that 
\[\Phi(z, D/2) \geq \left(1+ \frac{4 \log_2 \|\bwInt\|_1}{D}\right)^{D/2 - 1} \Phi(z,1)%
>\|\bwInt\|_1\] where we use that $1+x \geq 2^x$ for $x \in [0,1]$.
This would give a contradiction since $\|\bwInt\|_1 \ge \bwInt(E(B(z,D/2))) \geq \Phi(z,D/2) > \|\bwInt\|_1$.

Therefore, there must be some radius $0 < r < D/2$ such that
\begin{align*}
\Phi(z,r+1) \le \left(1+ \frac{4 \log_2 \|\bwInt\|_1}{D}\right) \Phi(z,r).
\end{align*}
Combining with our previous discussion yields that
\begin{align*}
|E(B(z,r), V \setminus B(z,r))|
&= \Phi(z, r+1) - \Phi(z, r) \\
&\le \frac{4 \log_2 \|\bwInt\|_1}{D}\Phi(z,r) \\
&\le \frac{4 \bwInt(E(B(z,r))) \log_2 \|\bwInt\|_1}{D}.
\end{align*}
We can therefore take $S = B(z,r)$, as desired.

Finally, to compute this cut, we run Dijkstra's algorithm from $u$ and $v$ in parallel and check for the earliest radius $r$ for either of them such that the inequality holds. Thus, the algorithm runs in time $O(|E_G(S)| \log |E_G(S)|)$.
\end{proof}

\printbibliography

\appendix

\section{Reducing Conductance to Sparsity}
\label{sec:condViaSpars}

Here, we prove \Cref{thm:conductanceVersion}. The proof is an adaption of Lemma 5.4 and Theorem 5.5 of \cite{ChuzhoyGLNPS19}.

\paragraph{The Transformation Algorithm.} Our algorithm is essentially a wrapper function around our main result \Cref{thm:fineGrainedReduction}. That is, we first construct a bounded degree graph $\wh{G}$ from $G$, then run the algorithm from \Cref{thm:fineGrainedReduction} on $\wh{G}$. 
If the algorithm certifies that $\wh{G}$ has no balanced sparse cuts, we prove that $G$ has no balanced low-conductance cuts.
Otherwise, if the algorithm returns a sparse cut in $\wh{G},$ we recover a balanced low-conductance cut in $G.$

We first describe the construction of $\wh{G}$ given $G = (V, E)$. Let us assume an arbitrary ordering of the edges incident to each vertex $v \in V$. $\wh{G} = (\wh{V}, \wh{E})$ is constructed as follows:
\begin{enumerate}
    \item For each vertex $v \in V$, create a set of vertices $X_v = \{v_{1}, v_2, \ldots, v_{\deg(v)}\}$, and an $\psi_0$-expander $H_v$ on $X_v$ using \Cref{thm:constExp}.
    Add $H_v$ to $\wh{G}.$
    \item For each edge $e = (u, v) \in E$, we add $(u_i, v_j)$ to $\wh{E}$ if $e$ is the $i^{th}$ (and $j^{th}$) edge incident to $u$ (and $v$, respectively). 
\end{enumerate}
Clearly, $\wh{G}$ has $\vol(G) = 2m$ vertices and each vertex has at most $10$ incident edges, $9$ from the expander and $1$ from the corresponding edge in $G$. 

We now run the algorithm from our main result, \Cref{thm:fineGrainedReduction}, on the graph $\wh{G}$. If the algorithm certifies that no $b$-balanced $\psi$-sparse cut exists in $\wh{G},$ we return the same result for $G$.
Otherwise, we run \Cref{alg:transform} on the returned cut $(A, \overline{A})$ in $\wh{G}$ to obtain a $\Omega(b)$-balanced $\Omega(\psi)$-sparse cut $(S, \overline{S})$ in $G$. It is straight-forward to check that \Cref{alg:transform} is deterministic and runs in time linear in the number of edges of $G$ and thus the runtimes stated in \Cref{thm:fineGrainedReduction} are asymptotically not affected.

\begin{algorithm}
\Return $S = \Set{u \in V}{\Abs{X_u \cap A} \ge \Abs{X_u \setminus A}}$.\label{lne:addS} 
\caption{$\textsc{Transform}(G, \wh{G}, A \subseteq V(\wh{G}))$}\label{alg:transform}
\end{algorithm}

\paragraph{Certifying $G$.} We start by showing that if no $\Omega(b)$-balanced $O(\psi)$-sparse cut is found on $\wh{G}$, then no such cut exists in $G$ either.

\begin{lemma}
\label{lemma:reduceToConstDegNoCut}
Given a balance parameter $b \in (0, 1/4)$, if every cut $(X, \overline{X})$ in $\wh{G}$ with $\Abs{X}, \Abs{\overline{X}} \ge b \cdot |V(\wh{G})|$ has $\Psi_{\wh{G}}(S) \geq \psi$, then every cut $(S, \overline{S})$ in $G$ with $\vol_G(S), \vol_G(\overline{S}) \ge b \cdot \vol(G)$ has  $\Phi_{G}(X) \geq \psi$.
\end{lemma}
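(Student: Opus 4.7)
The plan is to lift any balanced cut $(S,\overline{S})$ of $G$ to a balanced cut $(A,\overline{A})$ of $\wh{G}$ whose sparsity in $\wh{G}$ exactly equals the conductance of $(S,\overline{S})$ in $G$, and then invoke the hypothesis. The natural lift is
\[
A \;=\; \bigcup_{v \in S} X_v, \qquad \overline{A} \;=\; \bigcup_{v \in \overline{S}} X_v,
\]
i.e.\ we take the entire ``cloud'' $X_v$ of copies of $v$ to the side of $A$ corresponding to $v$'s side of $S$.

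First I would verify the balance condition. Since the $X_v$'s partition $V(\wh{G})$ and $|X_v| = \deg_G(v)$, we have $|A| = \sum_{v \in S} \deg_G(v) = \vol_G(S)$ and similarly $|\overline{A}| = \vol_G(\overline{S})$. Because $|V(\wh{G})| = \vol_G(G)$, the assumption $\vol_G(S), \vol_G(\overline{S}) \ge b\cdot \vol_G(G)$ immediately yields $|A|,|\overline{A}| \ge b\cdot |V(\wh{G})|$, so $(A,\overline{A})$ is admissible for the hypothesis.

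Next I would show the edge identity $|E_{\wh{G}}(A,\overline{A})| = |E_G(S,\overline{S})|$. The edges of $\wh{G}$ split into two types: (i) the expander edges inside each $H_v$, which are contained in $X_v$ and hence lie entirely on one side of $(A,\overline{A})$ by construction, contributing $0$ to the cut; and (ii) the ``portal'' edges, which correspond bijectively to the edges of $G$ via the rule that $e = (u,v) \in E$ becomes $(u_i, v_j) \in \wh{E}$ where $i,j$ are the indices of $e$ at $u$ and $v$. A portal edge crosses $(A,\overline{A})$ if and only if its two endpoints lie in clouds $X_u, X_v$ on opposite sides of $(S,\overline{S})$, i.e.\ iff the corresponding edge of $G$ crosses $(S,\overline{S})$. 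This gives the claimed bijection.

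Combining these two facts,
\[
\Psi_{\wh{G}}(A) \;=\; \frac{|E_{\wh{G}}(A,\overline{A})|}{\min\{|A|,|\overline{A}|\}} \;=\; \frac{|E_G(S,\overline{S})|}{\min\{\vol_G(S),\vol_G(\overline{S})\}} \;=\; \Phi_G(S).
\]
By the hypothesis applied to $(A,\overline{A})$, the left-hand side is at least $\psi$, so $\Phi_G(S) \ge \psi$. There is essentially no obstacle here; the construction of $\wh{G}$ was designed precisely to make the sparsity of such lifted cuts equal the conductance of the original cut, and the expander gadgets $H_v$ only become important in the reverse direction (recovering a balanced conductance cut in $G$ from a sparse cut in $\wh{G}$), which is handled by the companion lemma about \textsc{Transform}.
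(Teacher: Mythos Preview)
Your proof is correct and follows essentially the same approach as the paper: lift the cut $(S,\overline{S})$ to $(A,\overline{A}) = (\bigcup_{v\in S} X_v, \bigcup_{v\in \overline{S}} X_v)$, observe that $|A|=\vol_G(S)$, $|\overline{A}|=\vol_G(\overline{S})$, and $|E_{\wh{G}}(A,\overline{A})| = |E_G(S,\overline{S})|$ (since the expander edges stay inside the clouds), then apply the hypothesis. Your explicit identification $\Psi_{\wh{G}}(A)=\Phi_G(S)$ is exactly the paper's computation, just stated as an equality of ratios rather than unpacked as an inequality on the numerator.
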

\begin{proof}
Let $(S, \overline{S})$ be any cut in $G$ with $\vol_G(S), \vol_G(\overline{S}) \ge b \cdot \vol(G)$. Define $X_S = \cup_{u \in S} X_u$ and $\overline{X_S} = \cup_{u \not\in S} X_u = X_{\overline{S}}.$ Observe that $|E_G(S, \overline{S})| = |E_{\wh{G}}(X_S, \overline{X_S})|$ because the $\psi_0$-expander edges in $\wh{G}$ do not appear in the cut and every cut edge $(u_i, v_j)$ in $\wh{G}$ corresponds to the cut edge $(u, v) \in G.$

By construction of $\wh{G}$, we have that $|X_S| = \vol_G(S)$ and $|\overline{X_S}| = \vol_G(\overline{S})$ and therefore $|X_S|, |\overline{X_S}| \geq b \cdot \vol(G) = b |V(\wh{G})|$ by assumption on $(S, \overline{S})$. Thus $(X_S, \overline{X_S})$ is balanced in $\wh{G}$ and we can use the guarantee that $\Phi_{\wh{G}}(X_S) \geq \psi$. This yields
\begin{align*}
|E_G(S, \overline{S})| = |E_{\wh{G}}(X_S, \overline{X_S})| \ge \psi \cdot \min\{\Abs{X_S}, \Abs{\overline{X_S}}\} = \psi \cdot \min\{\vol(S), \vol(\overline{S})\}.
\end{align*}
\end{proof}

\paragraph{Returning a Sparse Cut.} It remains to prove that the above algorithm transforms any balanced sparse cut in $\wh{G}$ to a balanced low conductance cut in $G.$ We prove this claim in two steps. We first show that the number of edges in the cut $(S, \overline{S})$ in $G$ is comparable to the number of edges in $(A, \overline{A})$ in $\wh{G}$. 

\begin{claim}\label{clm:numCrossingRemainsLow}
$|E_G(S, \overline{S})| = O\left(\Abs{E_{\wh{G}}(A, \overline{A})}\right)$.
\end{claim}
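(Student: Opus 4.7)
The plan is to classify each edge $(u,v) \in E_G(S,\overline{S})$ according to which side of $A$ its two endpoints in $\wh{G}$ land on, and then use the expander property of the gadgets $H_w$ to charge the ``bad'' cases back to edges of $E_{\wh{G}}(A,\overline{A})$.

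First I would set up notation: for each $w \in V$, write $A_w = X_w \cap A$ and $\overline{A}_w = X_w \setminus A$, so that $|A_w| + |\overline{A}_w| = \deg_G(w)$. By the definition in \Cref{lne:addS}, $w \in S$ means $|A_w| \geq |\overline{A}_w|$, hence $|\overline{A}_w| = \min(|A_w|,|\overline{A}_w|)$, and symmetrically $w \in \overline{S}$ means $|A_w| = \min(|A_w|,|\overline{A}_w|)$. Now fix any edge $e = (u,v) \in E_G(S,\overline{S})$ with $u \in S$ and $v \in \overline{S}$. Let $(u_i,v_j)$ be its unique image in $\wh{G}$. Exactly one of three cases holds: (a) $(u_i,v_j) \in E_{\wh{G}}(A,\overline{A})$; (b) both $u_i,v_j \in A$, in which case $v_j \in A_v$ lies on the \emph{minority} side of $X_v$; (c) both $u_i,v_j \in \overline{A}$, in which case $u_i \in \overline{A}_u$ lies on the minority side of $X_u$.

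Edges of type (a) are charged directly to $E_{\wh{G}}(A,\overline{A})$ and contribute at most $|E_{\wh{G}}(A,\overline{A})|$ in total. For types (b) and (c), since each $u_i$ is incident to exactly one edge of $G$, the total count is at most
\[
\sum_{v \in \overline{S}} |A_v| \;+\; \sum_{u \in S} |\overline{A}_u| \;=\; \sum_{w \in V} \min(|A_w|,|\overline{A}_w|).
\]
Now I would invoke \Cref{thm:constExp}: each $H_w$ is a $\psi_0$-expander on $X_w$, so
\[
|E_{H_w}(A_w, \overline{A}_w)| \;\geq\; \psi_0 \cdot \min(|A_w|,|\overline{A}_w|).
\]
Summing over $w$ and noting that the expander edges of distinct $H_w$'s are vertex-disjoint and all lie inside $\wh{G}$, we get $\sum_w |E_{H_w}(A_w,\overline{A}_w)| \leq |E_{\wh{G}}(A,\overline{A})|$, and therefore the total contribution from cases (b) and (c) is at most $|E_{\wh{G}}(A,\overline{A})|/\psi_0$.

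Combining the two bounds yields $|E_G(S,\overline{S})| \leq (1 + 1/\psi_0)\,|E_{\wh{G}}(A,\overline{A})| = O(|E_{\wh{G}}(A,\overline{A})|)$, as desired. The only non-trivial step is the case analysis plus the observation that for the ``unaccounted'' edges of types (b) and (c), one endpoint is forced to lie on the minority side of its gadget --- this is precisely where the expansion of $H_w$ buys us the charging argument; everything else is bookkeeping.
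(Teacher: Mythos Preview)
Your proof is correct and follows essentially the same charging argument as the paper: both bound the edges of $E_G(S,\overline{S})$ whose $\wh{G}$-image is \emph{not} already in $E_{\wh{G}}(A,\overline{A})$ by the minority counts $\min(|X_w\cap A|,|X_w\setminus A|)$, and then pay for these via the $\psi_0$-expansion of the gadgets $H_w$. The paper phrases the comparison at the level of $E_{\wh{G}}(X_S,\overline{X_S})$ versus $E_{\wh{G}}(A,\overline{A})$ rather than doing your explicit three-case split on $(u_i,v_j)$, but the content is identical and the final bound $(1+1/\psi_0)\,|E_{\wh{G}}(A,\overline{A})|$ matches.
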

\begin{proof}
Define $X_S = \cup_{u \in S} X_u$. Consider any vertex $u \in V$, we have that the graph $H_u$ contributes at least $\psi_0 \cdot \min\{|X_u \cap A|, |X_u \setminus A|\}$ edges to the cut $\Abs{E_{\wh{G}}(A, \overline{A})}$. But in $\wh{G}$, the number of edges incident to $X_u$ that are in the cut $(S, \overline{S})$ but where previously not in the cut $(A, \overline{A})$ can be at most $\min\{|X_u \cap A|, |X_u \setminus A|\}$ since $H_u$ is contained entirely in $S$ or $\overline{S}$ and only one additional edge is incident to each vertex in $X_u$. 

Thus, we can charge each edge in $E_{H_u}(A, \overline{A})$ with at most $1/\psi_0$ edges from $E_{\wh{G}}(S, \overline{S}) \setminus E_{\wh{G}}(A, \overline{A})$ incident on $u$ and cover all such edges. We conclude that $|E_{\wh{G}}(S, \overline{S})| \le |E_{\wh{G}}(A, \overline{A})| + |E_{\wh{G}}(A, \overline{A})|/\psi_0$, and finally use that $|E_G(S, \overline{S})| = |E_{\wh{G}}(X_S, \overline{X_S})|$ as observed in \Cref{lemma:reduceToConstDegNoCut}.
\end{proof}

Next, we prove that $(S, \overline{S})$ is a balanced cut.

\begin{claim}\label{clm:balancedProd}
If $\Psi_G(A) \leq \psi_0/2$, we have $\vol_G(S) \geq \frac{1}{2}|A|$ and $\vol_G(\overline{S}) \geq \frac{1}{2}|\overline{A}|$.
\end{claim}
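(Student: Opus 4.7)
The plan is to exploit the $\psi_0$-expansion of each local gadget $H_u$ to argue that, in aggregate, only a small number of vertices across all $X_u$ lie on the ``wrong side'' of $A$ relative to the majority vote that defines $S$. Set $m_u \coloneqq \min\{|X_u \cap A|, |X_u \setminus A|\}$. Since $H_u$ is a $\psi_0$-expander on $X_u$, each $H_u$ contributes at least $\psi_0 \cdot m_u$ edges to $E_{\wh{G}}(A, \overline{A})$, and these contributions are across disjoint edge sets. Summing and using the sparsity assumption (interpreted as $\Psi_{\wh{G}}(A) \leq \psi_0/2$, since $A \subseteq V(\wh{G})$),
\[
\psi_0 \sum_{u \in V} m_u \;\leq\; |E_{\wh{G}}(A, \overline{A})| \;\leq\; \frac{\psi_0}{2} \min\{|A|, |\overline{A}|\},
\]
which gives the key estimate $\sum_{u} m_u \leq \tfrac{1}{2}\min\{|A|, |\overline{A}|\}$.

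Next, I would decompose $|A|$ vertex-by-vertex and use the definition of $S$ to rewrite each term. Since $|X_u| = \deg_G(u)$, for $u \in S$ we have $|X_u \cap A| = |X_u| - |X_u \setminus A| = \deg_G(u) - m_u$, while for $u \notin S$ we have $|X_u \cap A| = m_u$. Therefore
\[
|A| \;=\; \sum_{u \in S}\bigl(\deg_G(u) - m_u\bigr) \;+\; \sum_{u \notin S} m_u \;\leq\; \vol_G(S) + \sum_{u \in V} m_u \;\leq\; \vol_G(S) + \tfrac{1}{2}|A|,
\]
where the last step uses the key estimate together with $\min\{|A|,|\overline{A}|\} \leq |A|$. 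Rearranging yields $\vol_G(S) \geq \tfrac{1}{2}|A|$.

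For the second bound, I would run a symmetric argument with the roles of $A$ and $\overline{A}$ (equivalently $S$ and $\overline{S}$) swapped: for $u \notin S$ we have $|X_u \setminus A| = \deg_G(u) - m_u$, and for $u \in S$ we have $|X_u \setminus A| = m_u$, so
\[
|\overline{A}| \;=\; \sum_{u \notin S}\bigl(\deg_G(u) - m_u\bigr) + \sum_{u \in S} m_u \;\leq\; \vol_G(\overline{S}) + \sum_u m_u \;\leq\; \vol_G(\overline{S}) + \tfrac{1}{2}|\overline{A}|,
\]
giving $\vol_G(\overline{S}) \geq \tfrac{1}{2}|\overline{A}|$.

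There is no real obstacle here; the only subtle point is to notice that the sparsity hypothesis is really about $\wh{G}$ (the statement's ``$\Psi_G(A)$'' must refer to $\wh{G}$, since $A \subseteq V(\wh{G})$) and that the bound $\sum_u m_u \leq \tfrac{1}{2}\min\{|A|,|\overline{A}|\}$ is simultaneously useful for both halves of the conclusion. The tie-breaking convention in \Cref{lne:addS} (putting $u$ into $S$ when $|X_u \cap A| = |X_u \setminus A|$) is consistent with the rewriting above and does not affect the bounds.
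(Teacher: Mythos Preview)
Your proof is correct and follows essentially the same approach as the paper: both use the $\psi_0$-expansion of each $H_u$ to bound $\sum_u m_u$ against $|E_{\wh{G}}(A,\overline{A})|$, and then combine this with the sparsity hypothesis and the decomposition of $|A|$ (resp.\ $|\overline{A}|$) over the $X_u$'s. The only cosmetic difference is that the paper phrases the argument by contradiction (assuming $\vol_G(S) < \tfrac{1}{2}|A|$ and deriving $|E_{\wh{G}}(A,\overline{A})| > \tfrac{\psi_0}{2}|A|$), whereas you give the direct inequality chain; your version is arguably cleaner, and you also correctly flag that the hypothesis must be read as $\Psi_{\wh{G}}(A) \le \psi_0/2$.
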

\begin{proof}
We prove $\vol_G(S) \geq \frac{1}{2}|A|$ (the proof of $\vol_G(\overline{S}) \geq \frac{1}{2}|\overline{A}|$ is symmetric). Let us assume for the sake of contradiction that $\vol_G(S) < \frac{1}{2}|A|$. We argued before that for every $u \in V$, we have $|E_{H_u}(A, \overline{A})| \geq \psi_0 \cdot \min\{|A \cap X_u|, |A \setminus X_u|\}$.
We again define $X_S = \cup_{u \in S} X_u$ and observe that the fact that $\sum_{u \in S} |A \cap X_u| \le |X_S| = \vol_G(S) < \frac{1}{2}|A|$ implies that $\sum_{u \in S} |A \setminus X_u| \geq |A| - |X_S| > \frac{1}{2}|A|$.
Definition of $S$ also yields that $|A \setminus X_u| \le |A \cap X_u|.$

Combining insights, we conclude 
\[
|E_{\wh{G}}(A, \overline{A})|
\geq \sum_u |E_{H_u}(A, \overline{A})|
\geq \sum_u \psi_0 \cdot \min\{|A \cap X_u|, |A \setminus X_u|\}
\ge  \sum_{u \in S} \psi_0 \cdot |A \setminus X_u|
> \frac{\psi_0}{2}|A|
\]
which implies that $\Psi_G(A) > \psi_0 /2$ which contradicts our assumption, as desired.
\end{proof}

Finally, we combine our insights to prove \Cref{thm:conductanceVersion}.

\begin{proof}[Proof of \Cref{thm:conductanceVersion}.]
We have from the algorithm that $|A|, |\overline{A}| \geq b \cdot |V(\wh{G})| = 2bm$. Therefore, by \Cref{clm:balancedProd}, we produce a cut $(S, \overline{S})$ in $G$ with $\vol_G(S), \vol_G(\overline{S}) \ge b/2 \cdot \vol(G)$. 
By \Cref{clm:numCrossingRemainsLow}, we further have that $|E_G(S, \overline{S})| \leq O(\Abs{E_{\wh{G}}(A, \overline{A})})$ and therefore $\Phi_G(S) = \frac{|E_G(S, \overline{S})|}{\vol_G(S), \vol_G(\overline{S})} = O\left( \frac{\Abs{E_{\wh{G}}(A, \overline{A})}}{\min\{|A|, |\overline{A}|\}}\right) = O(\phi)$ where the last equality stems from the fact that $(A, \overline{A})$ had $\Psi_{\wh{G}}(A) \leq \phi$ by \Cref{thm:fineGrainedReduction}.
\end{proof}

\section{The Constant-Degree Assumption}
\label{sec:constDegAssump}

In this section, we prove that the following assumptions are without loss of generality.

\begin{assumption}\label{assump:constDegree}
When computing a sparse cut with respect to sparsity, we may assume at a cost of a constant factor in the output quality that the input graph $G$ has maximum degree 10.
\end{assumption}

\begin{proof}
Consider obtaining the graph $\wh{G}$ from $G$ by adding $\lceil m/n \rceil$ self-loops to each vertex in $G$. We then invoke \Cref{thm:conductanceVersion} on $\wh{G}$ with $\psi$ and parameter $b$.

Note first that in a connected graph $G$, we have that $\wh{G}\leq 4m$. Further, note that since self-loops do not appear in cuts, we have $E_G(S, \overline{S}) = E_{\wh{G}}(S, \overline{S})$ for all $S$. 

Now, if the algorithm certifies low conductance of $\wh{G}$, we have for each $(S, \overline{S})$ in $G$ where $|S|, |\overline{S}| \geq 4b \cdot n$ that $\vol_{\wh{G}}(S) \geq |S| \lceil m/n \rceil \geq 4b m \geq b \cdot \vol(\wh{G})$. Since $|E_G(S, \overline{S})| = |E_{\wh{G}}(S, \overline{S})| \geq \psi \min\{\vol_{\wh{G}}(S), \vol_{\wh{G}}(\overline{S})\} \geq \frac{1}{3} \psi \min\{|S|, |\overline{S}|\}$. Thus every $4b$-balanced sparse cut has sparsity at least $\frac{1}{3}\psi$.  Otherwise, the algorithm returns a cut $(S, \overline{S})$ of conductance at most $\phi$ in $\wh{G}$. But, we have $\Phi_{\wh{G}}(S) \geq \Psi_{\wh{G}}(S) = \Psi_{G}(S)$ for all $S$.
\end{proof}

\section{A Simple Randomized Algorithm to Construct Low-Degree Expanders}
\label{sec:randExpand}

\begin{algorithm}
Construct an empty graph $H$ on $n$ vertices.\\
\ForEach{$v \in V(H)$}{
    \For{$i = 1, 2, \ldots, k = 80 \log n$}{
     Sample a vertex $u$ from $V(H)$ uniformely and i.i.d. at random.\\ 
     Add edge $(u,v)$ to $H$.
    }
}
\Return $H$
\caption{$\textsc{RandConstDegExpander}(n)$}
\label{alg:randExpa}
\end{algorithm}

\paragraph{The Algorithm.} Here, we provide \Cref{alg:randExpa} which implements the algorithm mentioned in \Cref{rmk:simpleExpander}.

\paragraph{Analysis.} Before we start our analysis, we recall the following Chernoff bound.

\begin{theorem}
Given i.i.d. $\{0,1\}$-random variables $X_1, X_2, \ldots, X_k$, $X = \sum_i X_i$ and any $\delta \geq 0$, we have $P[X \geq (1+\delta) \mathbb{E}[X]] \leq e^{- \frac{\delta^2 \mathbb{E}[X]}{(2+\delta)}}$ and $P[X \leq (1-\delta)\mathbb{E}[X]] \leq e^{- \frac{\delta^2\mathbb{E}[X]}{2}}$.
\end{theorem}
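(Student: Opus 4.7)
The plan is to prove this by the standard Chernoff / moment generating function (MGF) method. For the upper tail, fix $t > 0$ and apply Markov's inequality in the form
\[
P[X \geq (1+\delta)\mu] = P[e^{tX} \geq e^{t(1+\delta)\mu}] \leq e^{-t(1+\delta)\mu}\,\mathbb{E}[e^{tX}],
\]
where $\mu = \mathbb{E}[X]$. Independence factors the MGF as $\mathbb{E}[e^{tX}] = \prod_i \mathbb{E}[e^{tX_i}]$. Writing $p_i = P[X_i = 1]$, we have $\mathbb{E}[e^{tX_i}] = 1 + p_i(e^t - 1) \leq \exp(p_i(e^t-1))$ from the inequality $1+x \leq e^x$, so taking the product yields the clean MGF bound $\mathbb{E}[e^{tX}] \leq \exp(\mu(e^t-1))$.

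Next I would optimize by choosing $t = \ln(1+\delta)$, which produces the canonical ``raw'' Chernoff estimate
\[
P[X \geq (1+\delta)\mu] \leq \left(\frac{e^\delta}{(1+\delta)^{1+\delta}}\right)^\mu.
\]
To arrive at the cleaner form in the statement, it suffices to verify the scalar inequality $(1+\delta)\ln(1+\delta) - \delta \geq \delta^2/(2+\delta)$ for all $\delta \geq 0$. I would prove this by a direct calculus check: set $g(\delta) := (1+\delta)\ln(1+\delta) - \delta - \delta^2/(2+\delta)$, observe $g(0) = 0$, and compute $g'(\delta)$ to see it is nonnegative on $\delta \geq 0$.

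The lower tail is entirely symmetric. One applies the same Markov/MGF step with $t < 0$, uses $\mathbb{E}[e^{tX}] \leq \exp(\mu(e^t-1))$ (valid for all real $t$), and optimizes at $t = \ln(1-\delta)$ for $\delta \in [0,1)$ to obtain $(e^{-\delta}/(1-\delta)^{1-\delta})^\mu$. The cleaner form $e^{-\delta^2 \mu/2}$ then follows from the elementary inequality $(1-\delta)\ln(1-\delta) \geq -\delta + \delta^2/2$, which is proved by comparing the Taylor expansion of $(1-\delta)\ln(1-\delta) = -\delta + \sum_{k \geq 2} \delta^k/(k(k-1))$ term by term with $-\delta + \delta^2/2$.

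There is no conceptual obstacle here, only some bookkeeping: the delicate step is converting the raw Chernoff form into the two exponential bounds in the statement, particularly for the upper tail where the $2+\delta$ denominator is essential to keep the inequality valid for large $\delta$ (any constant denominator would fail when $\delta \gg 1$). Since these bounds are completely standard, the honest presentation is probably just to cite a textbook such as Mitzenmacher--Upfal or Dubhashi--Panconesi, and I would expect the authors to do the same rather than reproduce the calculation in full.
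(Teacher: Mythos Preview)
Your proof sketch is correct and follows the standard MGF approach to Chernoff bounds. However, the paper does not actually prove this theorem at all: it simply states it with the preface ``we recall the following Chernoff bound'' and uses it as a black box. Your final paragraph anticipated this exactly --- the authors treat it as a textbook fact and do not reproduce any calculation.
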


Let us first prove that $H$ has bounded degree.

\begin{claim}
\Cref{alg:randExpa} returns $H$ such that w.h.p., the maximum degree is $O(\log n)$.
\end{claim}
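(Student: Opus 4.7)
The plan is to decompose the degree of an arbitrary vertex $v$ into two parts and apply a Chernoff bound to the only part that is actually random in an interesting way, then union bound over vertices.

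First I would write $\deg_H(v) = D_v^{\text{self}} + D_v^{\text{other}}$, where $D_v^{\text{self}}$ counts the edges added in the $k = 80\log n$ iterations of the outer loop iteration that processes $v$ itself, and $D_v^{\text{other}}$ counts the edges added when some other vertex $v' \ne v$ is processed and happens to sample $u = v$. Deterministically, $D_v^{\text{self}} \le k$. For $D_v^{\text{other}}$, across the other $n-1$ vertices each running $k$ independent uniform samples over $V(H)$, we can write $D_v^{\text{other}} = \sum_{j=1}^{(n-1)k} X_j$ where the $X_j$'s are i.i.d.\ Bernoulli random variables with parameter $1/n$. Hence $\mathbb{E}[D_v^{\text{other}}] = (n-1)k/n \le k = 80 \log n$.

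Next I would apply the stated Chernoff upper tail to $D_v^{\text{other}}$ with $\delta = 1$: taking $\mu = \mathbb{E}[D_v^{\text{other}}] \le k$ (and extending by zero-mean dummies to bring the expectation up to exactly $k$ if needed, which only makes the tail worse),
\[
\Pr\bigl[D_v^{\text{other}} \ge 2k\bigr] \;\le\; \exp\!\left(-\tfrac{k}{3}\right) \;=\; \exp\!\left(-\tfrac{80 \log n}{3}\right) \;\le\; n^{-20}.
\]
Combined with the deterministic bound on $D_v^{\text{self}}$ this gives $\Pr[\deg_H(v) \ge 3k] \le n^{-20}$, i.e.\ $\deg_H(v) = O(\log n)$ except with probability $n^{-20}$.

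Finally I would take a union bound over the $n$ vertices of $H$ to conclude that $\max_{v \in V(H)} \deg_H(v) \le 3k = 240\log n$ with probability at least $1 - n^{-19}$, which is the desired w.h.p.\ statement. The only minor subtlety is that self-loops contribute $2$ to the standard degree count, but there are at most $k$ of them at $v$ (all from the self-loop iteration), so they only inflate the bound by an additive $O(\log n)$ and do not affect the conclusion. I do not expect any real obstacle here; this is a textbook concentration argument.
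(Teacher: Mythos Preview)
Your argument is correct and follows essentially the same route as the paper: bound the number of times a fixed vertex is sampled as a sum of independent Bernoulli$(1/n)$ variables with mean $\le k$, apply the stated Chernoff upper tail, add the deterministic $k$ out-edges, and union bound over vertices. The only slip is the phrase ``zero-mean dummies'' (zero-mean variables don't raise the expectation); the clean fix---which is exactly what the paper does---is to include all $nk$ sampling events so the mean is exactly $k$, but this is cosmetic and does not affect the validity of your bound.
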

\begin{proof}
Each vertex $u$ is selected as the second endpoint of an edge added to $H$ in the inner for-loop with probability $1/n$ per iteration. As there are $nk$ iterations of this for-loop, and each iteration is independent, we have by the Chernoff bound that each vertex $u$ is at most $k$ times selected with probability at least $1 - e^{- \frac{4 k}{4}} = 1 - e^{-k} = 1- n^{-32}$. 

Since each vertex $u$ has degree equal to $k$ plus the number of times it is sampled, we have that its degree is at most $2k$ with probability at most $1 - n^{-32}$. We obtain our result over all vertices in $H$ by applying a union bound.
\end{proof}

\begin{claim}
\Cref{alg:randExpa} returns a $\Omega(\log n)$-expander $H$ w.h.p.
\end{claim}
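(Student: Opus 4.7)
The plan is to prove expansion by a standard union bound over all cuts of $H$, using the Chernoff bound stated just above. Fix an arbitrary cut $(S, \overline{S})$ with $|S| = s \le n/2$. For each vertex $v \in S$ and each of the $k = 80\log n$ independent samples performed in the inner for-loop at $v$, let $X_{v,i}$ be the indicator that the $i$-th sample lands in $\overline{S}$. By construction these $sk$ indicators are mutually independent, and each has success probability $|\overline{S}|/n = (n-s)/n \ge 1/2$ since $s \le n/2$. Letting $X = \sum_{v \in S}\sum_{i=1}^k X_{v,i}$, every crossing sample contributes at least one edge to $E_H(S, \overline{S})$ (self-loops and multi-edges only help this lower bound), so $|E_H(S,\overline{S})| \ge X$, and $\mathbb{E}[X] \ge sk/2$.

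Next I would apply the lower-tail Chernoff bound to $X$. Taking $\delta = 3/4$, I get
\[
\Pr\!\left[X \le \tfrac{1}{8} sk\right] \;\le\; \Pr\!\left[X \le (1-\tfrac{3}{4})\mathbb{E}[X]\right] \;\le\; \exp\!\left(-\tfrac{(3/4)^2}{2}\cdot \tfrac{sk}{2}\right) \;\le\; \exp\!\left(-\tfrac{9sk}{64}\right).
\]
Thus for this fixed cut, $|E_H(S,\overline{S})| \ge sk/8 = 10 (\log n)\cdot s$ except with probability $e^{-9sk/64}$.

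Now I union bound over all choices of $S$ of size $s$ and then over $s$. There are at most $\binom{n}{s} \le (en/s)^s$ cuts of size $s$, so the total failure probability is at most
\[
\sum_{s=1}^{\lfloor n/2\rfloor} \binom{n}{s} \exp\!\left(-\tfrac{9sk}{64}\right) \;\le\; \sum_{s=1}^{\lfloor n/2\rfloor} \exp\!\left(s\bigl(1 + \log(n/s) - \tfrac{9k}{64}\bigr)\right).
\]
Plugging in $k = 80\log n$, the bracketed quantity is at most $1 + \log n - \tfrac{9\cdot 80}{64}\log n \le -10\log n$ for all $s \ge 1$ and large enough $n$, so the sum is bounded by $\sum_{s\ge 1} n^{-10s} = O(n^{-10})$. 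Therefore, with probability at least $1 - O(n^{-10})$, every cut $(S,\overline{S})$ with $|S| \le n/2$ has $|E_H(S,\overline{S})| \ge 10 (\log n) \cdot |S|$, which is exactly the statement that $H$ is an $\Omega(\log n)$-expander.

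The only subtlety I expect is keeping the union bound tight for small $s$: for $s=1$ there are $n$ choices of $S$ but only $sk = k$ samples to work with, so the tail bound must beat a factor of $n$. Our choice $k = 80\log n$ is comfortably large enough to absorb this, which is precisely why the leading constant of $80$ appears in the algorithm. Once this small-$s$ case is handled, larger $s$ only makes the Chernoff exponent $sk$ bigger and the binomial coefficient grows more slowly than $e^{sk/64}$, so no further difficulty arises.
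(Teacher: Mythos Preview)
Your proposal is correct and follows essentially the same approach as the paper: bound the probability that a fixed cut $(S,\overline{S})$ with $|S|=s$ has few crossing edges via a lower-tail Chernoff bound on the $sk$ independent samples made from vertices in $S$, then union bound over all $\binom{n}{s}$ cuts of each size $s$. The only differences are cosmetic (you use $\delta=3/4$ where the paper uses $\delta=1/2$, yielding slightly different constants in the exponent), and your handling of the small-$s$ case and the monotonicity argument are the same in spirit as the paper's.
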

\begin{proof}
Consider any set $S$ with $ |S| \leq n/2$. Then, we have that $\mathbb{E}[E_H(S, \overline{S})] \geq \frac{1}{2}|S| k$ since each edge $(u,v)$ sampled when the foreach-loop iterates over a vertex $v \in S$ has $u \not\in S$ with probability at least $\frac{1}{2}$ and there are $|S| k$ such sampling events. Since they are independent, we further have from the Chernoff bound that $P[|E_H(S, \overline{S})| \leq \frac{1}{4}|S|k] \leq e^{-\frac{|S|k}{16}} = n^{-5|S|}$. It is clear that if $|E_H(S, \overline{S})| > \frac{1}{4}|S|k$ then $\Psi_H(S) \geq \frac{k}{4} = \Omega(\log n)$.

The remaining difficulty is that there are an exponential number of cuts so a union bound seems at first hard to apply. However, we observe that there are at most ${\alpha \choose n} \leq \left(\frac{ne}{\alpha}\right)^{\alpha} \leq n^{3\alpha}$ for $\alpha \geq 1$ cuts where the smaller half contains $\alpha$ vertices. As we have proven that a cut is $\psi$-sparse with probability at most $n^{- 5\alpha}$, we can thus conclude by a simple union bound argument that $H$ is not $\Omega(\log n)$-expander with probability at most $\sum_{\alpha \geq 1} {\alpha \choose n} \cdot n^{- 5\alpha} \leq 1/n$.
\end{proof}

\end{document}